\tikzstyle{decision} = [diamond, draw, fill=blue!20, 
\tikzstyle{block} = [rectangle, draw, fill=blue!20, 
\tikzstyle{line} = [draw, -latex']
\tikzstyle{cloud} = [draw, ellipse,fill=red!20, node distance=3cm,
\newcommand{\ts}{\mathsf{T}}
\newcommand{\rt}{\mathrm{T}}
\newcommand{\rtKL}{\mathrm{TKL}}
\newcommand{\rtJS}{\mathrm{TJS}}
\tikzset{main node/.style={circle,fill=blue!20,draw,minimum size=1cm,inner sep=0pt},  }
\begin{document}
\title[Transport information Bregman divergences]{Transport information Bregman Divergences}
\author[Li]{Wuchen Li}
\email{wuchen@mailbox.sc.edu}
\address{Department of Mathematics, University of South Carolina.}
\keywords{Transport Bregman divergence; Transport KL divergence; Transport Jensen-Shannon divergence.} 
\thanks{W.~Li is supported by the start up funding in University of South Carolina.} 
\begin{abstract}
We study Bregman divergences in probability density space embedded with the $L^2$--Wasserstein metric. Several properties and dualities of transport Bregman divergences are provided. In particular, we derive the  transport Kullback--Leibler (KL) divergence by a
Bregman divergence of negative Boltzmann--Shannon entropy in $L^2$--Wasserstein space. We also derive analytical formulas and generalizations of transport KL divergence for one-dimensional probability densities and Gaussian families.   
\end{abstract}
\maketitle
\section{Introduction}
Bregman divergences between probability densities are crucial in statistical inference, optimization, and image/signal processing with vast applications in AI inference problems and optimizations \cite{CA, Nielsen, BI}. They measure differences between two densities by generalizing $L^2$ (Euclidean) distances. In general, the Bregman divergence is not symmetric and satisfies several duality properties, which are useful in estimation and optimization algorithms. One typical example is the Kullback--Leibler (KL) divergence, which is a Bregman divergence of (negative) Boltzmann-Shannon entropy in $L^2$ space. 

Information geometry \cite{IG, AA, IG2} studies properties of Bregman divergences. It focuses on the Fisher-Rao information metric, a.k.a. the Hessian metric of (negative) Boltzmann--Shannon entropy in $L^2$ space. A known fact is that the Fisher-Rao information metric can be used to construct the KL divergence and its generalizations \cite{AA, FA} with desirable duality properties. 

Recently, optimal transport, a.k.a. Wasserstein distance, introduces the other type of distance functions in probability density space. 
It uses the pushforward mapping functions to measure differences between probability densities \cite{Villani2009_optimal}.
A particular example is the $L^2$--Wasserstein distance, which forms an analog of $L^2$ distance between mapping functions. It also introduces a metric space for probability densities, namely the $L^2$--Wasserstein space \cite{AGS, otto2001}. In this space, the $L^2$--Wasserstein distance shows a particular convexity property towards mapping functions \cite{AGS, MC}. This convexity property nowadays has vast applications in fluid dynamics \cite{ESG, Lafferty, LiG}, inverse problems \cite{YF}, and AI inference problems \cite{AKM, GHY, PC}. 

Natural questions arise. {\em What are Bregman divergences in $L^2$--Wasserstein space? In particular, what is the ``KL divergence''  in $L^2$--Wasserstein space?}

In this paper, we formulate Bregman divergences in $L^2$--Wasserstein space, namely transport Bregman divergences. We study several properties of transport Bregman divergences. In particular, we derive the transport Bregman divergence of (negative) Boltzmann--Shanon entropy. It can be viewed as the KL divergence in $L^2$--Wasserstein space, whose properties, examples, and symmetric generalizations are provided. 

We briefly present the main result. Denote a compact smooth set by $\Omega\subset\mathbb{R}^d$, and let $\mathcal{P}(\Omega)$ be the probability density space supported on $\Omega$. Given a ``convex'' functional $\mathcal{F}\colon\mathcal{P}(\Omega)\rightarrow\mathbb{R}$, define
\begin{equation*}
\mathrm{D}_{\rt,\mathcal{F}}(p\|q)=\mathcal{F}(p)-\mathcal{F}(q)-\int_\Omega \Big(\nabla_x\frac{\delta}{\delta q(x)}\mathcal{F}(q), \nabla_x\Phi_p(x)-x\Big)q(x)dx,
\end{equation*}
where $p$, $q\in \mathcal{P}(\Omega)$, $\frac{\delta}{\delta q(x)}$ is the $L^2$ first variation w.r.t. $q(x)$, and $\nabla_x\Phi_p$ is the optimal transport map function that pushforwards $q$ to $p$, such that
\begin{equation*}
(\nabla_x{\Phi_p})_{\#}q=p.
\end{equation*}
Here we call $\mathrm{D}_{\rt,\mathcal{F}}$ the transport Bregman divergence. If $\mathcal{F}$ is a second moment functional, i.e. $\mathcal{F}(p)=\int_\Omega \|x\|^2 p(x)dx$, then $\mathrm{D}_{\rt, \mathcal{F}}$ forms the $L^2$--Wasserstein distance. If $\mathcal{F}$ is the negative Boltzmann--Shanon entropy, i.e. $\mathcal{F}(p)=\int_\Omega p(x)\log p(x)dx$, then the transport Bregman divergence satisfies 
\begin{equation*}
\mathrm{D}_{\rtKL}(p\|q)=\mathrm{D}_{\rt,\mathcal{F}}(p\|q)=\int_\Omega \Big(\Delta_x\Phi_p(x)-\log\mathrm{det}(\nabla^2_x\Phi_p(x))-d\Big)q(x)dx.
\end{equation*}
We name $\mathrm{D}_{\rtKL}$ the {\em transport KL divergence}. We notice that $\mathrm{D}_{\rtKL}$ has a closed form formula in one dimensional sample space. 
\begin{equation*}
\mathrm{D}_{\rtKL}(p\|q):=\int_0^1\Big(\frac{\nabla_x F_p^{-1}(x)}{\nabla_x F_q^{-1}(x)}-\log\frac{\nabla_x F_p^{-1}(x)}{\nabla_x F_q^{-1}(x)}-1\Big)dx,
\end{equation*}
where $F_p$, $F_q$ are cumulative distribution functions (CDFs) of $p$, $q$, respectively, and $F_p^{-1}$, $F_q^{-1}$ are their inverse CDFs. We remark that $\mathrm{D}_{\rtKL}$ is an Itakura--Saito type divergence in term of mapping functions. 

There are joint works in the literature between optimal transport and information geometry to study Bregman divergences \cite{AKM, Modin1, GHY, MP1, MP2, KM, Wong, WY}. In particular, \cite{GHY, Wong, WY} apply linear programming formulations of optimal transport. They use divergence functions on sample space as ground costs to construct the ones in probability density space. Compared to the above approaches, we define Bregman divergences by Jacobi operators of mapping functions. And our divergence functionals are built from both gradient and Hessian operators of information entropies in $L^2$--Wasserstein space. Besides, our definition inherits ideas from the Wasserstein subdifferential calculus defined in \cite{AGS}. Here we focus on formulations and generalizations towards transport Bregman divergences. 

This paper is organized as follows. In section \ref{sec2}, we review the definition of Bregman divergence in Euclidean space. In section \ref{sec3} and \ref{sec4}, we construct transport Bregman divergences and study their properties. In section \ref{section4}, we study the transport KL divergence and its symmetric generalizations. Several analytical formulas of transport KL divergences for one-dimensional probability densities and Gaussian families are provided.  
\section{Bregman divergences in Euclidean space}\label{sec2}
In this section, we briefly recall the definition of Bregman divergences in Euclidean space. Bregman divergences generalize Euclidean distances as follows. 
\begin{definition}[Bregman divergence]\label{def1}
Denote a closed convex set $\Omega\subset\mathbb{R}^d$. Let $(\cdot, \cdot)$ be a Euclidean inner product and denote the Euclidean norm by $\|\cdot\|$. Let $\psi\colon \Omega\rightarrow\mathbb{R}$ be a smooth strictly convex function. The Bregman divergence $\mathrm{D}_\psi\colon \Omega\times\Omega\rightarrow\mathbb{R}$ is defined by 
\begin{equation*}
\mathrm{D}_{\psi}(y\|x)=\psi(y)-\psi(x)-(\nabla\psi(x), y-x),\quad\textrm{for any $x,y\in\Omega$.}
\end{equation*}
\end{definition}
We present several examples of Bregman divergences. 
\begin{itemize}
\item[(i)] Let $\Omega=\mathbb{R}^1$ and $\psi(z)=z^2$, $z\in \Omega$. Then  
\begin{equation*}
\mathrm{D}_\psi(y\|x)=y^2-x^2-2x(y-x)=(y-x)^2.
\end{equation*}
Here $\mathrm{D}_\psi$ forms the Euclidean distance. 
\item[(ii)] Let $\Omega=\mathbb{R}_+$ and $\psi(z)=z\log z$, $z\in \Omega$. Then  
\begin{equation*}
\mathrm{D}_\psi(y\|x)=y\log\frac{y}{x}-(y-x).
\end{equation*}
Here $\mathrm{D}_\psi$ leads to the KL divergence in $\mathbb{R}_+$.  
\item[(iii)]  Let $\Omega=\mathbb{R}_+$ and $\psi(z)=-\log z$, $z\in\Omega$.  Then 
\begin{equation}\label{IS}
\mathrm{D}_\psi(y\|x)
=\frac{y}{x}-\log\frac{y}{x}-1.
\end{equation}
Here $\mathrm{D}_\psi$ is known as the Itakura--Saito divergence in $\mathbb{R}_+$.
\end{itemize}
There are several properties of Bregman divergences. 
\begin{itemize}
\item Nonnegativity: $\mathrm{D}_\psi(y\|x)\geq 0$;
\item Hessian metric: Consider a Taylor expansion as follows. Denote $\Delta x\in \mathbb{R}^d$, then  
\begin{equation*}
\mathrm{D}_\psi(x+\Delta x\|x)=\frac{1}{2}\Delta x^{\ts}\nabla^2\psi(x)\Delta x+o(\|\Delta x\|^2),
\end{equation*}
where $\nabla^2\psi$ is the Hessian operator of $\psi$ w.r.t. the Euclidean metric. If $\psi(z)=z\log z$, then $\nabla^2\psi=\frac{1}{z}$ is known as the Fisher-Rao information metric;
\item Asymmetry: In general, $\mathrm{D}_\psi$ is not necessary symmetric w.r.t. $x$ and $y$, i.e. $\mathrm{D}_\psi(y\|x)\neq \mathrm{D}_\psi(x\|y)$. For this reason, we call $\mathrm{D}_\psi$ the ``divergence'' function instead of a distance function;
\item Convexity in the first variable: $\mathrm{D}_\psi(y\|x)$ is always convex w.r.t. $y$, not necessary w.r.t. $x$. The Itakura--Saito divergence \eqref{IS} is an example.  
\item Duality:  Denote the conjugate (dual) function of $\psi$ by $\psi^*(x^*)=\sup_{x\in\Omega}~(x,x^*)-\psi(x)$. 
Then 
\begin{equation*}
 D_{\psi^*}(x^{*}\|y^*)=\mathrm{D}_\psi(y\|x).
\end{equation*}
Here $x^*=\nabla\psi(x)$, $y^*=\nabla\psi(y)$ are dual points corresponding to $x$ and $y$.
\end{itemize}
In practice, Bregman divergences have been extensively studied in probability density space embedded with the $L^2$ metric, which have vast applications in statistics, optimization and AI. In this paper, instead of using the $L^2$ metric, we formulate and study Bregman divergences w.r.t. the $L^2$--Wasserstein metric. 

\section{Bregman divergences in $L^2$--Wasserstein space}\label{sec3}
In this section, we define the Bregman divergence in $L^2$--Wasserstein space. Several concrete examples are provided.  
\subsection{Review of $L^2$--Wasserstein space}
We briefly recall some facts in $L^2$--Wasserstein space \cite{AGS, Villani2009_optimal}. Denote $\Omega$ by a $d$--dimensional compact convex set. E.g., $\Omega=\mathbb{T}^d$, which is a $d$-dimensional torus. Denote the smooth probability density space by 
\begin{equation*}
\mathcal{P}(\Omega)=\Big\{p\in C^{\infty}(\Omega)\colon \int_\Omega p(x)dx=1,~p(x)\geq 0\Big\}.
\end{equation*}
Given $p$, $q\in\mathcal{P}(\Omega)$, the $L^2$--Wasserstein distance is defined by
\begin{equation*}
\mathrm{Dist}_{\mathrm{T}}(p,q)^2=\inf_{T\colon \Omega\rightarrow\Omega}\Big\{\int_\Omega\|T(x)-x\|^2q(x)dx\colon T_{\#}q(x)=p(x)\Big\},
\end{equation*}
where the infimum is among all differemorphisms $T$ that pushforward $q$ to $p$. Here, $_\#$ represents a pushforward operator, such that 
\begin{equation*}
p(T(x))\mathrm{det}(\nabla_xT(x))=q(x). 
\end{equation*}
The optimality condition for the map function $T$ can be formulated below.  
\begin{definition}[Transport coordinates]
Given $p$, $q\in \mathcal{P}(\Omega)$, suppose that there exists strictly convex functions $\Phi_p$, $\Phi_q\in C^{\infty}(\Omega)$, such that 
\begin{equation}\label{nc}
T(x)=\nabla_x\Phi_p(x)\quad\textrm{and}\quad \Phi_q(x)=\frac{\|x\|^2}{2}.
\end{equation}
In this case, the $L^2$--Wasserstein distance can be formulated by
\begin{equation*}
\mathrm{Dist}_{\mathrm{T}}(p,q)=\sqrt{\int_\Omega\|\nabla_x\Phi_p(x)-\nabla_x\Phi_q(x)\|^2q(x)dx}.
\end{equation*}
From now on, we always use $\nabla_x\Phi_p$, $\nabla_x\Phi_q$ to represent functions $T$, $x$, respectively. And we call $\Phi_p$, $\Phi_q$ the {transport coordinates}. 
\end{definition}

There is also a linear programming reformulation of $L^2$--Wasserstein distance. Denote the optimal joint probability density function for densities $p$ and $q$ by 
\begin{equation}\label{otpi}
\pi=(\nabla\Phi_q\times \nabla\Phi_p)_\#q=(\textrm{id}\times \nabla\Phi_p)_\#q,
\end{equation}
where 
\begin{equation*}
\int_\Omega \pi(x,y)dy=q(x), \quad\int_\Omega\pi(x,y)dx=p(y).
\end{equation*}
In this sense, the $L^2$--Wasserstein distance in term of the optimal joint density function satisfies 
\begin{equation*}
\mathrm{Dist}_{\mathrm{T}}(p,q)^2=\int_\Omega\int_\Omega \|y-x\|^2\pi(x,y)dxdy.
\end{equation*}

In addition, the $L^2$--Wasserstein distance also introduces a metric in $\mathcal{P}(\Omega)$.  
Denote the tangent space at $q\in \mathcal{P}(\Omega)$ by 
\begin{equation*}
T_q\mathcal{P}(\Omega)=\{\sigma\in C^{\infty}(\Omega)\colon \int_\Omega \sigma(x) dx=0\}.
\end{equation*}
Define an inner product $g_{\mathrm{T}}(q)\colon T_q\mathcal{P}(\Omega)\times T_q\mathcal{P}(\Omega)\rightarrow\mathbb{R}$, such that
\begin{equation*}
g_{\mathrm{T}}(q)(\sigma,\sigma)=\int_\Omega (\nabla_x\Phi(x), \nabla_x\Phi(x))q(x)dx,
\end{equation*}
where $\Phi, \sigma \in C^{\infty}(\Omega)$ satisfy 
\begin{equation*}
\sigma(x)=-\nabla_x\cdot(q(x)\nabla_x\Phi(x)). 
\end{equation*}
In literature, $(\mathcal{P}(\Omega), g_{\mathrm{T}})$ is often called the $L^2$--Wasserstein space. In this paper, we introduce Bregman divergences in $(\mathcal{P}(\Omega), g_{\mathrm{T}})$. To do so, we review several useful facts . 
\begin{proposition}[Facts in $L^2$--Wasserstein space \cite{AGS, Villani2009_optimal}]\label{prop4}
The following facts hold. 
 \begin{itemize}
\item[(i)] The $L^2$ gradient of functional $\mathrm{Dist}_{\mathrm{T}}(p,q)^2$ w.r.t. $q(x)$ satisfies 
\begin{equation*}
\frac{1}{2}\frac{\delta}{\delta q(x)}\mathrm{Dist}_{\rt}(p,q)^2=\Phi_p(x)-\Phi_q(x),
\end{equation*}
where $\Phi_p$, $\Phi_q$ are defined in \eqref{nc}. Here $\frac{\delta}{\delta q(x)}$ is the $L^2$ first variation operator w.r.t density $q$ at $x\in\Omega$. 
\item[(ii)] Denote a smooth functional $\mathcal{F}\colon\mathcal{P}(\Omega)\rightarrow\mathbb{R}$. The gradient operator of functional $\mathcal{F}$ in $(\mathcal{P}(\Omega), g_{\mathrm{T}})$ satisfies 
\begin{equation*}
\textrm{grad}_{\mathrm{T}}\mathcal{F}(q)(x)=-\nabla_x\cdot(q(x)\nabla_x \frac{\delta}{\delta q(x)}\mathcal{F}(q))\in T_q\mathcal{P}(\Omega).
\end{equation*}
\item[(iii)] The Hessian operator of functional $\mathcal{F}$ in $(\mathcal{P}(\Omega), g_{\mathrm{T}})$ satisfies 
\begin{equation}\label{Hessian}
\begin{split}
\mathrm{Hess}_{\mathrm{T}}\mathcal{F}(q)(\sigma,\sigma)=&\int_\Omega\int_\Omega \nabla^2_{xy}\frac{\delta^2}{\delta q(x)\delta q(y)}\mathcal{F}(q)(\nabla\Phi(x), \nabla\Phi(y))q(x)q(y)dxdy\\ 
&+\int_\Omega \nabla^2_x\frac{\delta}{\delta q(x)}\mathcal{F}(q)(\nabla\Phi(x), \nabla\Phi(x))q(x)dx,
\end{split}
\end{equation}
where $\Phi$, $\sigma\in C^{\infty}(\Omega)$ satisfy $\sigma(x)=-\nabla_x\cdot(q(x)\nabla_x\Phi(x))$, $\frac{\delta^2}{\delta q(x)\delta q(y)}$ is the $L^2$ second variation operator w.r.t. $q(x)$, $q(y)$ respectively,  $\nabla^2_{xy}$ is the second differential operator in $\Omega$ w.r.t. $x$, $y$ respectively, and $\nabla_x^2$ is the Hessian operator in $\Omega$ w.r.t. $x$; see details in \cite{LiG}; 
\item[(iv)] We say that functional $\mathcal{F}$ is $\lambda$--geodesic convex in $(\mathcal{P}(\Omega), g_{\mathrm{T}})$, if {for any $\sigma\in T_q\mathcal{P}(\Omega)$ and $q\in\mathcal{P}(\Omega)$,} there exists a constant $\lambda$, such that 
\begin{equation*}
\mathrm{Hess}_{\mathrm{T}}\mathcal{F}(q)(\sigma,\sigma)\geq \lambda g_{\mathrm{T}}(q)(\sigma,\sigma). 
\end{equation*}
In other words, for any $p, q\in\mathcal{P}(\Omega)$ with $\Phi_p, \Phi_q$ defined in \eqref{nc}, then for any $t\in [0,1]$, 
\begin{equation*}
\mathcal{F}((t\nabla\Phi_p+(1-t)\nabla\Phi_q)_\#q)\leq t \mathcal{F}(p)+(1-t)\mathcal{F}(q)-\lambda\frac{t(1-t)}{2}\mathrm{Dist}_{\rt}(p,q)^2.
\end{equation*}
In literature \cite{MC} and \cite[Definition 16.5]{Villani2009_optimal}, the geodesic convexity in $(\mathcal{P}(\Omega), g_{\mathrm{T}})$ is named the displacement convexity. 
\end{itemize}
\end{proposition}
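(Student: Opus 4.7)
The plan is to treat the four items in order, using formal Otto calculus on $\mathcal{P}(\Omega)$ together with Brenier's theorem.

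For item (i), I would begin from the Kantorovich dual
\[
\tfrac12\mathrm{Dist}_{\rt}(p,q)^2=\sup_{\phi,\psi}\Bigl\{\int\phi(y)\,p(y)\,dy+\int\psi(x)\,q(x)\,dx:\ \phi(y)+\psi(x)\le\tfrac12\|x-y\|^2\Bigr\},
\]
and apply the envelope theorem at the optimum so that only the explicit dependence on $q$ contributes; this gives $\tfrac12\frac{\delta}{\delta q(x)}\mathrm{Dist}_{\rt}^2=\psi^*(x)$, with $\psi^*$ the optimal Kantorovich potential. Brenier's theorem then identifies $\nabla\Phi_p(x)=x-\nabla\psi^*(x)$, i.e.\ $\nabla\psi^*=\nabla\Phi_q-\nabla\Phi_p$, which up to an additive constant (fixed by $\int q\,dx=1$) is the stated identity. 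For item (ii), the Wasserstein gradient is characterized by $g_{\rt}(q)(\mathrm{grad}_{\rt}\mathcal{F},\sigma)=d\mathcal{F}(\sigma)=\int\frac{\delta\mathcal{F}}{\delta q}\sigma\,dx$ for every tangent vector $\sigma=-\nabla_x\cdot(q\nabla_x\Phi)$. Substituting this $\sigma$ into the right-hand side and integrating by parts produces $\int(\nabla_x\frac{\delta\mathcal F}{\delta q},\nabla_x\Phi)\,q\,dx$, which matches the left-hand side precisely when $\mathrm{grad}_{\rt}\mathcal F(q)=-\nabla_x\cdot(q\nabla_x\frac{\delta\mathcal F}{\delta q})$.

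For item (iii), my plan is to differentiate $\mathcal{F}$ twice along a Wasserstein geodesic and match coefficients. Given $\sigma\in T_q\mathcal{P}(\Omega)$ with $\sigma=-\nabla_x\cdot(q\nabla_x\Phi)$, I would consider the displacement interpolation $q_t=(\mathrm{id}+t\nabla_x\Phi)_\#q$; its velocity potential $\Phi_t$ satisfies the Hamilton--Jacobi equation $\partial_t\Phi_t+\tfrac12\|\nabla_x\Phi_t\|^2=0$, while the density obeys $\partial_t q_t+\nabla_x\cdot(q_t\nabla_x\Phi_t)=0$. By item (ii), $\frac{d}{dt}\mathcal F(q_t)=\int(\nabla_x\frac{\delta\mathcal F}{\delta q_t},\nabla_x\Phi_t)\,q_t\,dx$. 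Differentiating once more in $t$ and distributing the derivative across the three factors, the $\partial_t\Phi_t$ contribution partially cancels against a piece of the continuity-equation contribution (after integrating by parts and using symmetry of $\nabla_x^2\Phi_t$), while $\partial_t\frac{\delta\mathcal F}{\delta q_t}(x)=\int\frac{\delta^2\mathcal F}{\delta q(x)\delta q(y)}\partial_t q_t(y)\,dy$ introduces the nonlocal double integral. Evaluating at $t=0$ should produce exactly the two summands of \eqref{Hessian}; this step is where I expect the main obstacle, since it relies on the Hamilton--Jacobi identity $\nabla_x\partial_t\Phi_t=-(\nabla_x^2\Phi_t)\nabla_x\Phi_t$ to trigger the cancellation and on the vanishing of all boundary terms (e.g.\ by taking $\Omega=\mathbb{T}^d$).

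For item (iv), equivalence of the Hessian bound and displacement convexity is a standard Riemannian argument: along a constant-speed geodesic $t\mapsto q_t$ from $q$ to $p$ one has $g_{\rt}(q_t)(\dot q_t,\dot q_t)\equiv\mathrm{Dist}_{\rt}(p,q)^2$, so $\mathrm{Hess}_{\rt}\mathcal F\ge\lambda g_{\rt}$ yields $\tfrac{d^2}{dt^2}\mathcal F(q_t)\ge\lambda\,\mathrm{Dist}_{\rt}(p,q)^2$; integrating this inequality twice and using the endpoint values $\mathcal F(p)$ and $\mathcal F(q)$ gives the displacement convexity inequality, and the converse follows by Taylor-expanding the integral inequality as $t\to 0$.
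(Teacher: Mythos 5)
The paper itself gives no proof of this proposition: it is presented as a list of known facts with citations to \cite{AGS}, \cite{Villani2009_optimal} and \cite{LiG}, so there is no in-paper argument to compare yours against. Your sketches of (ii), (iii) and (iv) are the standard Otto-calculus derivations of exactly these facts and would go through as described: the duality characterization $g_{\mathrm{T}}(q)(\mathrm{grad}_{\mathrm{T}}\mathcal{F},\sigma)=\int\frac{\delta\mathcal{F}}{\delta q}\sigma\,dx$ plus one integration by parts for (ii); for (iii), the second derivative of $t\mapsto\mathcal{F}(q_t)$ along displacement interpolation, where the Hamilton--Jacobi identity $\nabla_x\partial_t\Phi_t=-(\nabla_x^2\Phi_t)\nabla_x\Phi_t$ cancels against part of the transport term to leave the local summand of \eqref{Hessian}, and the continuity equation pushed through $\frac{\delta^2\mathcal{F}}{\delta q(x)\delta q(y)}$ yields the nonlocal double integral (the cancellation you anticipate does occur, and on $\mathbb{T}^d$ there are no boundary terms); and for (iv) the usual equivalence of a Hessian lower bound with $\lambda$-convexity along constant-speed geodesics, where you should also state explicitly that the curves $(t\nabla\Phi_p+(1-t)\nabla\Phi_q)_\#q$ \emph{are} the constant-speed geodesics of $(\mathcal{P}(\Omega),g_{\mathrm{T}})$, since that is what lets you identify $g_{\mathrm{T}}(q_t)(\dot q_t,\dot q_t)$ with $\mathrm{Dist}_{\mathrm{T}}(p,q)^2$.

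The one concrete problem is the last step of (i). Your own chain of identities gives $\tfrac12\frac{\delta}{\delta q(x)}\mathrm{Dist}_{\mathrm{T}}(p,q)^2=\psi^*(x)$ with $\nabla\psi^*=x-\nabla\Phi_p=\nabla\Phi_q-\nabla\Phi_p$, hence $\psi^*=\Phi_q-\Phi_p$ up to an additive constant. That is the \emph{negative} of the displayed identity $\Phi_p-\Phi_q$, yet you assert it ``is the stated identity.'' It is not: either you have made a sign slip in the final identification, or you are silently adopting the opposite convention for the first variation. The version your computation actually produces, $\Phi_q-\Phi_p$, is the standard one (the Wasserstein gradient of $\tfrac12\mathrm{Dist}_{\mathrm{T}}(\cdot,p)^2$ at $q$ points away from $p$, so its potential is $\Phi_q-\Phi_p$); the sign as printed in the proposition is compensated downstream by the minus sign in Proposition \ref{prop5}, so the Bregman divergence is unaffected, but your proof as written concludes with a statement that does not follow from the line before it. You should either derive the formula with the sign you actually obtain and note the convention mismatch explicitly, or identify where a compensating sign is supposed to enter.
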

\subsection{Transport Bregman divergence}
We are now ready to state the main result of this paper. We define Bregman divergences in $L^2$--Wasserstein space. 
\begin{definition}[Transport Bregman divergence]\label{def4}
Let $\mathcal{F}\colon \mathcal{P}(\Omega)\rightarrow\mathbb{R}$ be a smooth strictly displacement convex functional. Define $\mathrm{D}_{\rt,\mathcal{F}}\colon \Omega\times\Omega\rightarrow\mathbb{R}$ by 
\begin{equation}\label{TB}
\mathrm{D}_{\rt,\mathcal{F}}(p\|q)=\mathcal{F}(p)-\mathcal{F}(q)-\int_\Omega \Big(\nabla_x\frac{\delta}{\delta q(x)}\mathcal{F}(q), T(x)-x\Big)q(x)dx,
\end{equation}
where $T(x)$ is the optimal transport map function \eqref{nc} from $q$ to $p$, such that
\begin{equation*}
T_{\#}q=p\quad\textrm{and}\quad T(x)=\nabla_x\Phi_p(x).
\end{equation*}
In other words, 
\begin{equation*}
\mathrm{D}_{\rt,\mathcal{F}}(p\|q)=\mathcal{F}(p)-\mathcal{F}(q)-\int_\Omega \Big(\nabla_x\frac{\delta}{\delta q(x)}\mathcal{F}(q), \nabla_x\Phi_p(x)-\nabla_x\Phi_q(x)\Big)q(x)dx,\end{equation*}
where $\Phi_p$, $\Phi_q$ are defined in \eqref{nc}.
We call $\mathrm{D}_{\rt, \mathcal{F}}$ the {\em transport Bregman divergence}. 
\end{definition}
 The following proposition describes that functional \eqref{TB} is a generalization of Euclidean Bregman divergence in Definition \ref{def1}. 
\begin{proposition}\label{prop5}
Functional $\mathrm{D}_{\rt,\mathcal{F}}$ satisfies the following equality
\begin{equation*}
\begin{split}
\mathrm{D}_{\rt,\mathcal{F}}(p\|q)=&\mathcal{F}(p)-\mathcal{F}(q)-\frac{1}{2}\int_\Omega \textrm{grad}_{\mathrm{T}}\mathcal{F}(q)(x)\cdot \frac{\delta}{\delta q(x)}\mathrm{Dist}_{\rt}(p,q)^2 dx.
\end{split}
\end{equation*}
\end{proposition}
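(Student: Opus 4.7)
The plan is to unpack the integral term in the definition of $\mathrm{D}_{\rt,\mathcal{F}}(p\|q)$ using the two identities supplied by Proposition \ref{prop4}: namely (i), which identifies $\Phi_p(x)-\Phi_q(x)$ with $\frac{1}{2}\frac{\delta}{\delta q(x)}\mathrm{Dist}_{\rt}(p,q)^2$, and (ii), which expresses $\mathrm{grad}_{\rt}\mathcal{F}(q)$ as $-\nabla_x\!\cdot\!\bigl(q\,\nabla_x \tfrac{\delta\mathcal{F}}{\delta q}\bigr)$. The key structural observation is that the integral
\[
\int_\Omega\Big(\nabla_x\tfrac{\delta}{\delta q(x)}\mathcal{F}(q),\ \nabla_x\Phi_p(x)-\nabla_x\Phi_q(x)\Big)\,q(x)\,dx
\]
is exactly an $L^2(q)$ pairing of two gradient fields, so it can be turned into a pairing of a potential against a divergence by integration by parts.

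First I would rewrite $\nabla_x\Phi_p(x)-\nabla_x\Phi_q(x) = \nabla_x\bigl(\Phi_p(x)-\Phi_q(x)\bigr)$, and then substitute the formula from Proposition \ref{prop4}(i) to obtain
\[
\int_\Omega\Big(\nabla_x\tfrac{\delta\mathcal{F}}{\delta q},\ \nabla_x\bigl(\Phi_p-\Phi_q\bigr)\Big)\,q\,dx = \tfrac{1}{2}\int_\Omega\Big(\nabla_x\tfrac{\delta\mathcal{F}}{\delta q},\ \nabla_x\tfrac{\delta}{\delta q}\mathrm{Dist}_{\rt}(p,q)^2\Big)\,q\,dx.
\]
Then I would integrate by parts in $x$, moving the divergence onto the product $q\,\nabla_x\tfrac{\delta\mathcal{F}}{\delta q}$; the boundary terms vanish because $\Omega$ is taken to be compact (e.g.\ the torus $\mathbb{T}^d$ used throughout Section~\ref{sec3}) so there is no boundary contribution. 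This produces a factor $-\nabla_x\!\cdot\!\bigl(q\,\nabla_x\tfrac{\delta\mathcal{F}}{\delta q}\bigr)$ paired against $\tfrac{1}{2}\tfrac{\delta}{\delta q(x)}\mathrm{Dist}_{\rt}(p,q)^2$ in an $L^2(dx)$ pairing.

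Finally, I would invoke Proposition \ref{prop4}(ii) to recognise $-\nabla_x\!\cdot\!\bigl(q\,\nabla_x\tfrac{\delta\mathcal{F}}{\delta q}\bigr) = \mathrm{grad}_{\rt}\mathcal{F}(q)(x)$, obtaining
\[
\int_\Omega\Big(\nabla_x\tfrac{\delta\mathcal{F}}{\delta q},\ \nabla_x\Phi_p-\nabla_x\Phi_q\Big)\,q\,dx = \tfrac{1}{2}\int_\Omega \mathrm{grad}_{\rt}\mathcal{F}(q)(x)\cdot \tfrac{\delta}{\delta q(x)}\mathrm{Dist}_{\rt}(p,q)^2\,dx,
\]
which, substituted back into \eqref{TB}, is the claimed identity.

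The computation is essentially a single integration by parts, so there is no deep obstacle; the only care needed is justifying that the boundary terms vanish, which follows from working on $\Omega = \mathbb{T}^d$ (or any compact $\Omega$ together with the usual no-flux assumption implicit in the definition of $T_q\mathcal{P}(\Omega)$) and from the smoothness of $q$, $\Phi_p$, $\Phi_q$, and $\tfrac{\delta\mathcal{F}}{\delta q}$ assumed throughout. The conceptual point being demonstrated is simply that the displacement-gradient pairing implicit in Definition \ref{def4} coincides with the Otto-calculus pairing between $\mathrm{grad}_{\rt}\mathcal{F}(q)$ and the Wasserstein ``difference vector'' $\tfrac{1}{2}\tfrac{\delta}{\delta q}\mathrm{Dist}_{\rt}(p,q)^2$, so that \eqref{TB} really is the Wasserstein analogue of Definition \ref{def1}.
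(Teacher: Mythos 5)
Your proposal is correct and follows essentially the same route as the paper's proof: both rest on Proposition \ref{prop4} (i) and (ii) combined with a single integration by parts, the only difference being that you start from the integral in Definition \ref{def4} and work toward the gradient pairing, while the paper runs the same chain of equalities in the opposite direction. Your additional remark about the vanishing of boundary terms on $\mathbb{T}^d$ is a welcome (if minor) point of care that the paper leaves implicit.
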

\begin{proof}
The proof follows from the properties of the $L^2$--Wasserstein space. From Proposition \ref{prop4} (i), (ii), we have
\begin{equation}
\begin{split}
&\frac{1}{2}\int_\Omega \textrm{grad}_{\mathrm{T}}\mathcal{F}(q)(x)\cdot \frac{\delta}{\delta q(x)}\mathrm{Dist}_{\rt}(p,q)^2 dx\\=&-\int_\Omega \nabla_x\cdot(q(x)\nabla_x\frac{\delta}{\delta q(x)}\mathcal{F}(q))\cdot (\Phi_p(x)-\Phi_q(x)) dx\\
=&\int_\Omega \Big(\nabla_x\frac{\delta}{\delta q(x)}\mathcal{F}(q),\nabla_x\Phi_p(x)-\nabla_x\Phi_q(x)\Big)q(x) dx\\
=&\int_\Omega \Big(\nabla_x\frac{\delta}{\delta q(x)}\mathcal{F}(q),T(x)-x\Big)q(x) dx,
\end{split}
\end{equation}
where the second equality holds by the integration by parts formula. And we also apply the fact that $T(x)=\nabla_x\Phi_p(x)$ is the optimal mapping function. 
\end{proof}

We also formulate transport Bregman divergences by the optimal joint density function. 
\begin{proposition}\label{BJD}
Denote $\pi$ by \eqref{otpi}. Then  
\begin{equation*}
\mathrm{D}_{\rt,\mathcal{F}}(p\|q)=\mathcal{F}(p)-\mathcal{F}(q)-\int_\Omega \Big(\nabla_x\frac{\delta}{\delta q(x)}\mathcal{F}(q), y-x\Big)\pi(x,y)dxdy.
\end{equation*}
\end{proposition}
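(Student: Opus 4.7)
The plan is to reduce the claim to the definition of $\mathrm{D}_{\rt,\mathcal{F}}$ in \eqref{TB} by unfolding the pushforward formula \eqref{otpi} for $\pi$. Since $\pi=(\mathrm{id}\times\nabla\Phi_p)_\# q$, the change-of-variables rule for pushforwards says that for any continuous test function $\varphi\colon\Omega\times\Omega\rightarrow\mathbb{R}$,
\begin{equation*}
\int_\Omega\int_\Omega \varphi(x,y)\,\pi(x,y)\,dxdy=\int_\Omega \varphi\bigl(x,\nabla_x\Phi_p(x)\bigr)\,q(x)\,dx.
\end{equation*}

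The first step is to apply this identity with the specific choice
\begin{equation*}
\varphi(x,y)=\Bigl(\nabla_x\tfrac{\delta}{\delta q(x)}\mathcal{F}(q),\,y-x\Bigr),
\end{equation*}
which is continuous in $(x,y)$ because $\mathcal{F}$ is smooth and $\Omega$ is compact. This turns the double integral on the right-hand side of the claim into
\begin{equation*}
\int_\Omega \Bigl(\nabla_x\tfrac{\delta}{\delta q(x)}\mathcal{F}(q),\,\nabla_x\Phi_p(x)-x\Bigr)q(x)\,dx.
\end{equation*}

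The second step is simply to recognize that, by the definition of the transport coordinates in \eqref{nc}, the optimal map from $q$ to $p$ is $T(x)=\nabla_x\Phi_p(x)$, and the self-map $\nabla_x\Phi_q(x)=x$ appears as the displacement origin. Substituting $T(x)=\nabla_x\Phi_p(x)$ shows that the integral above coincides with the integral appearing in Definition~\ref{def4} of $\mathrm{D}_{\rt,\mathcal{F}}(p\|q)$. Subtracting this from $\mathcal{F}(p)-\mathcal{F}(q)$ yields the desired identity.

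There is essentially no obstacle here: the proposition is a restatement of the transport Bregman divergence in Kantorovich (joint density) form rather than Monge (map) form, and the equivalence is immediate from the fact that the Monge map $\nabla\Phi_p$ induces the Kantorovich optimal plan $\pi$ via \eqref{otpi}. The only point worth flagging is that the argument implicitly relies on existence and regularity of $\Phi_p$, which is guaranteed by the standing smoothness assumptions on $p,q\in\mathcal{P}(\Omega)$ together with Brenier's theorem, as already invoked throughout this section.
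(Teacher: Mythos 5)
Your proof is correct and follows essentially the same route as the paper's: both reduce the double integral against $\pi$ to a single integral against $q$ by using that $\pi=(\mathrm{id}\times\nabla\Phi_p)_\#q$ is concentrated on the graph of the optimal map. Your formulation via the test-function change-of-variables identity is in fact slightly cleaner than the paper's barycentric representation $T(x)=\int_\Omega y\,\pi(x,y)\,dy$ (which, read literally, omits a normalization by $q(x)$), but the two computations are the same.
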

\begin{proof}
We represent the optimal mapping function by 
\begin{equation*}
T(x)=\nabla_x\Phi_p(x)=\int_\Omega y\pi(x,y)dy.
\end{equation*}
Hence 
\begin{equation*}
\int_\Omega\Big(\nabla_x\frac{\delta}{\delta q(x)}\mathcal{F}(q),T(x)-x\Big)q(x)dx=\int_{\Omega}\int_{\Omega}\Big(\nabla_x\frac{\delta}{\delta q(x)}\mathcal{F}(q), y-x\Big)\pi(x,y)dxdy.
\end{equation*}
\end{proof}
\begin{remark}[Connections with Euclidean Bregman divergences]
As shown in proposition \ref{prop5}, we replace the gradient operators of both functional and distance in Euclidean space to the corresponding ones in $L^2$--Wasserstein space. 
\end{remark}
\begin{remark}
Our formulations of Bregman divergences connects with the sub-differential calculus proposed in \cite[Chapter 10]{AGS}. One can also view the transport Bregman divergence in Proposition \ref{BJD} as a definition for transport Bregman divergences. Here the optimal map function $T=\nabla_x\Phi_p$ is not necessary required to be smooth. In this paper, we focus on formulations of transport Bregman divergences, and leave their analytical properties in future works. 
\end{remark}
\subsection{Formulations}
We first demonstrate several examples of transport Bregman divergences.   
\begin{proposition}[Mapping formulations]\label{prop7}
Transport Bregman divergences have the following formulations in term of mapping functions $\Phi_p$, $\Phi_q$ defined in \eqref{nc}.
\begin{itemize}
\item[(i)] Consider a linear energy by 
\begin{equation*}
\mathcal{V}(p)=\int_\Omega V(x)p(x)dx,
\end{equation*}
where the linear potential function $V\in C^{\infty}(\Omega)$ is strictly convex in $\mathbb{R}^d$. Then 
\begin{equation}\label{linear}
\begin{split}
\mathrm{D}_{\rt,\mathcal{V}}(p\|q)=&\int_\Omega \mathrm{D}_{V}(\nabla_x\Phi_p(x)\|\nabla_x\Phi_q(x))q(x)dx,
\end{split}
\end{equation}
where $\mathrm{D}_{V}\colon\Omega\times\Omega\rightarrow\mathbb{R}$ is a Euclidean Bregman divergence of $V$ defined by
\begin{equation*}
\mathrm{D}_V(z_1\|z_2)=V(z_1)-V(z_2)-\nabla V(z_2)\cdot (z_1-z_2),\quad\textrm{for any $z_1,z_2\in\Omega$}.
\end{equation*}
\item[(ii)] Consider an interaction energy by
\begin{equation*}
\mathcal{W}(p)=\frac{1}{2}\int_\Omega\int_\Omega W(x,\tilde x)p(x)p(\tilde x)dxd\tilde x,
\end{equation*}
where the interaction kernel potential function is $W(x,\tilde x)=W(\tilde x,x)\in C^{\infty}(\Omega\times\Omega)$. Assume $W(x,\tilde x)=\tilde W(x-\tilde x)$, where $\tilde W(z)$ is a convex function of $z$ in $\Omega$ with $\tilde W(z)=\tilde W(-z)$. Then
\begin{equation}\label{interact}
\begin{split}
\mathrm{D}_{\rt,\mathcal{W}}(p\|q)=\frac{1}{2}\int_\Omega\int_\Omega \mathrm{D}_{\tilde W}\big(\nabla_x\Phi_p(x)-\nabla\Phi_p(\tilde x)\|\nabla\Phi_q(x)-\nabla\Phi_q(\tilde x)\big)q(x)q(\tilde x)dxd\tilde x,
\end{split}
\end{equation}
where $\mathrm{D}_{\tilde W}\colon \Omega\times\Omega\rightarrow \mathbb{R}$ is a Euclidean Bregman divergence of $\tilde W$ defined by
\begin{equation*}
\mathrm{D}_{\tilde W}(z_1\|z_2)=\tilde W(z_1)-\tilde W(z_2)-\nabla \tilde W(z_2)\cdot (z_1-z_2),\quad\textrm{for any $z_1,z_2\in\Omega$}.
\end{equation*}
\item[(iii)] Consider a negative entropy by
\begin{equation*}
\mathcal{U}(p)=\int_\Omega U(p(x))dx,
\end{equation*}
where the entropy potential $U\colon \Omega\rightarrow\mathbb{R}$ is second differentiable and convex. Then
\begin{equation*}
\mathrm{D}_{\rt,\mathcal{U}}(p\|q)=\int_\Omega \mathrm{D}_{\hat U}\big({\nabla_x^2\Phi_p(x)}\|\nabla_x^2\Phi_q(x)\big)q(x)dx,
\end{equation*}
where $\mathrm{D}_{\hat U}\colon \mathbb{R}^{d\times d}\times \mathbb{R}^{d\times d}\rightarrow\mathbb{R}$ is a matrix Bregman divergence function. Denote function $\hat U\colon \mathbb{R}_+\times\mathbb{R}^{d\times d}\rightarrow \mathbb{R}$ by 
\begin{equation*}
\hat U(q,A)=U(\frac{q}{\mathrm{det}(A)})\frac{\mathrm{det}(A)}{q},
\end{equation*}
where $q\in\mathbb{R}_+$ is the given reference density, and    
\begin{equation*}
\mathrm{D}_{\hat U}(A\|B)=\hat U(q,A)-\hat U(q, B)-\mathrm{tr}\Big(\nabla_B \hat U(q, B)\cdot (A-B)\Big), 
\end{equation*}
for any $A$, $B\in\mathbb{R}^{d\times d}$ and $\nabla_B$ is the Frechet derivative of a symmetric matrix $B$.
In details, 
\begin{equation}\label{entropy}
\begin{split}
\mathrm{D}_{\rt,\mathcal{U}}(p\|q)
=\int_\Omega &\Big\{U(\frac{q(x)}{\mathrm{det}(\nabla^2_x \Phi_p(x))})\mathrm{det}(\nabla^2_x \Phi_p(x))-U(q(x))\\
&+\mathrm{tr}(\nabla^2_x\Phi_p(x)-\mathbb{I})\big[U'(q(x))q(x)-U(q(x))\big]\Big\}dx,
\end{split}
\end{equation}
where $\mathbb{I}\in\mathbb{R}^{d\times d}$ is an identity matrix.
\end{itemize}
\end{proposition}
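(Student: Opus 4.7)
The plan is to handle the three cases by a common recipe and then adapt to the structure of each $\mathcal{F}$. First I would compute the $L^2$ first variation $\frac{\delta}{\delta q(x)}\mathcal{F}(q)$ and its spatial gradient. Next I would expand $\mathcal{F}(p)$ using the pushforward identity $p(\nabla_x\Phi_p(x))\det(\nabla_x^2\Phi_p(x))=q(x)$ via the change of variables $y=\nabla_x\Phi_p(x)$. Finally I would combine the resulting integrals and, depending on the case, symmetrize or integrate by parts to expose a Euclidean (or matrix) Bregman divergence pointwise in $x$. The identity $\nabla_x\Phi_q(x)=x$ from \eqref{nc} is what lets me recognize $x$ as the reference point of the Euclidean Bregman divergence.

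For case (i), the first variation of $\mathcal{V}$ is $V$, so $\nabla_x\frac{\delta}{\delta q(x)}\mathcal{V}(q)=\nabla V(x)$, while the pushforward yields $\mathcal{V}(p)=\int_\Omega V(\nabla_x\Phi_p(x))q(x)dx$. Collecting terms inside the integral over $q(x)dx$ produces, at each $x$, the quantity $V(\nabla_x\Phi_p)-V(x)-\nabla V(x)\cdot(\nabla_x\Phi_p-x)$, which is precisely $\mathrm{D}_V(\nabla_x\Phi_p(x)\|\nabla_x\Phi_q(x))$, giving \eqref{linear}.

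For case (ii), the first variation is $\int_\Omega W(x,\tilde x)q(\tilde x)d\tilde x$, whose $x$-gradient under translation invariance $W(x,\tilde x)=\tilde W(x-\tilde x)$ equals $\int_\Omega\nabla\tilde W(x-\tilde x)q(\tilde x)d\tilde x$. Using evenness of $\tilde W$, and hence antisymmetry $\nabla\tilde W(z)=-\nabla\tilde W(-z)$, I would antisymmetrize the single integral by swapping $x\leftrightarrow\tilde x$ and averaging; this replaces $\nabla_x\Phi_p(x)-x$ by $\tfrac{1}{2}\bigl[(\nabla_x\Phi_p(x)-\nabla_x\Phi_p(\tilde x))-(x-\tilde x)\bigr]$ paired with $\nabla\tilde W(x-\tilde x)$. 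An analogous change of variables rewrites $\mathcal{W}(p)$ and $\mathcal{W}(q)$ as double integrals of $\tilde W(\nabla_x\Phi_p(x)-\nabla_x\Phi_p(\tilde x))$ and $\tilde W(x-\tilde x)$ against $q(x)q(\tilde x)$, so matching terms delivers \eqref{interact}.

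For case (iii), which is the subtlest, the first variation is $U'(q(x))$ and its gradient rewrites as $U''(q)\nabla q=\nabla[qU'(q)-U(q)]$ by direct differentiation of the Legendre-type expression. Integration by parts (justified by compactness of $\Omega$ or by working on $\mathbb{T}^d$, so no boundary terms arise) converts the linear correction into $\int_\Omega[qU'(q)-U(q)]\,\mathrm{tr}(\nabla_x^2\Phi_p-\mathbb{I})dx$. Expanding $\mathcal{U}(p)$ via the Monge--Amp\`ere change of variables $p(\nabla_x\Phi_p(x))=q(x)/\det(\nabla_x^2\Phi_p(x))$ produces the remaining pieces of \eqref{entropy}. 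To recognize this as a matrix Bregman divergence, I would compute the Fr\'echet derivative $\nabla_B\hat U(q,B)$ using $\nabla_B\det B=\det B\,B^{-\mathsf T}$, evaluate it at $B=\nabla_x^2\Phi_q=\mathbb{I}$ to get a multiple of the identity involving $U(q)/q-U'(q)$, and verify that $q\,\mathrm{D}_{\hat U}(\nabla_x^2\Phi_p\|\mathbb{I})$ reproduces the integrand of \eqref{entropy} termwise. The main obstacle is precisely this part (iii): interweaving the Monge--Amp\`ere change of variables, the integration by parts hinging on $\nabla(qU'(q)-U(q))=qU''(q)\nabla q$, and the matrix calculus for $\hat U$ at $\mathbb{I}$. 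Cases (i) and (ii) amount to rewriting Euclidean Bregman divergences along optimal transport rays and are routine once the common recipe is in hand.
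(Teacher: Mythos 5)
Your proposal is correct and follows essentially the same route as the paper's appendix proof: compute the $L^2$ first variation, push $\mathcal{F}(p)$ back to an integral against $q$ via the Monge--Amp\`ere relation, symmetrize in $x\leftrightarrow\tilde x$ for the interaction case, and integrate by parts for the entropy case to expose $\mathrm{tr}(\nabla_x^2\Phi_p-\mathbb{I})\,[q U'(q)-U(q)]$. The only quibble is the displayed identity in part (iii), which should read $q\,\nabla_x U'(q)=\nabla_x\big(qU'(q)-U(q)\big)$ (the factor $q$ is missing on the left), but since the linear correction carries the weight $q(x)$ your subsequent integration by parts and the resulting formula are exactly right.
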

\begin{proposition}[Joint density formulations]\label{prop8}
Transport Bregman divergences have the following formulations by the joint density function $\pi$ defined in \eqref{otpi}.
\begin{itemize}
\item[(i)] Formula \eqref{linear} leads to 
\begin{equation*}
\mathrm{D}_{\rt,\mathcal{V}}(p\|q)=\int_\Omega\int_\Omega \mathrm{D}_V(y\|x)\pi(x,y)dxdy. 
\end{equation*}
\item[(ii)] Formula \eqref{interact} leads to 
\begin{equation*}
\mathrm{D}_{\rt,\mathcal{W}}(p\|q)=\frac{1}{2}\int_{\Omega}\int_{\Omega}\int_{\Omega}\int_{\Omega} \mathrm{D}_{\tilde W}(y-\tilde y\|x-\tilde x)\pi(x,y)\pi(\tilde x,\tilde y)dxdyd\tilde xd\tilde y.
\end{equation*}
\item[(iii)] Formula \eqref{entropy} leads to
\begin{equation*}
\begin{split}
\mathrm{D}_{\rt,\mathcal{U}}(p\|q)=&\quad\int_\Omega U(\int_\Omega \pi(x,y)dx)dy-\int_\Omega U(\int_\Omega \pi(x,y)dy)dx\\
&+\int_\Omega\int_\Omega (y\cdot\nabla_x\frac{\pi(x,y)}{q(x)}-d)\bar U(\int_\Omega \pi(x,z)dz)dxdy,
\end{split}
\end{equation*}
where we denote 
\begin{equation*}
\bar U(z):=zU'(z)-U(z). 
\end{equation*}
\end{itemize}
\end{proposition}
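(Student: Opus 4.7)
My plan is to translate each mapping-function formulation in Proposition \ref{prop7} into a joint-density formulation using a single structural identity, namely
\[
\int_{\Omega}\int_{\Omega} f(x,y)\pi(x,y)\,dxdy \;=\; \int_{\Omega} f\bigl(x,\nabla_x\Phi_p(x)\bigr)q(x)\,dx,
\]
which is just a restatement of $\pi=(\mathrm{id}\times\nabla\Phi_p)_\# q$ from \eqref{otpi}. The three parts of the proposition will differ only in what integrand $f$ I substitute and in how many independent copies of $\pi$ I tensor.

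For part (i) I would take $f(x,y)=\mathrm{D}_V(y\|x)$ and apply the identity directly to \eqref{linear}, using $\nabla_x\Phi_q(x)=x$. For part (ii) I would use the product pushforward $\pi(x,y)\pi(\tilde x,\tilde y)=\bigl[(\mathrm{id}\times\nabla\Phi_p)\otimes(\mathrm{id}\times\nabla\Phi_p)\bigr]_\#\bigl(q(x)q(\tilde x)\bigr)$ with the integrand $\mathrm{D}_{\tilde W}(y-\tilde y\|x-\tilde x)$ in \eqref{interact}. Both amount to unfolding the definition of pushforward and require no further ingredients.

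Part (iii) is the substantive one, and I would split the mapping formula \eqref{entropy} into three summands handled separately. For $\int_{\Omega} U\bigl(q/\det\nabla^2_x\Phi_p\bigr)\det\nabla^2_x\Phi_p\,dx$ I would change variables via $y=\nabla_x\Phi_p(x)$, use the Monge--Amp\`ere relation $p(y) = q(x)/\det\nabla^2_x\Phi_p(x)$, and then write $p(y)=\int \pi(x,y)\,dx$ to obtain $\int U\bigl(\int\pi(x,y)dx\bigr)dy$. For $-\int U(q)\,dx$ I would just insert $q(x)=\int \pi(x,y)\,dy$. For the third piece $\int \mathrm{tr}(\nabla^2_x\Phi_p-\mathbb{I})\bar U(q)\,dx$ I would use the disintegration identity $\int y\,\pi(x,y)\,dy = q(x)\nabla_x\Phi_p(x)$ (the conditional of $\pi$ given $x$ is concentrated at $\nabla_x\Phi_p(x)$), differentiate in $x$ to get $\int y\cdot\nabla_x\pi(x,y)\,dy = \nabla_x\cdot(q\nabla_x\Phi_p) = q\,\mathrm{tr}(\nabla^2_x\Phi_p)+\nabla q\cdot\nabla_x\Phi_p$, and then perform a Gauss--Green integration by parts against $\bar U(q)$ to reassemble the target expression $(y\cdot\nabla_x\pi-d)\bar U(q)$.

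The main obstacle I foresee is the bookkeeping inside part (iii): tracking the cross term $\nabla q\cdot\nabla_x\Phi_p$ that appears when one expands $\nabla_x\cdot(q\nabla_x\Phi_p)$, and absorbing it through integration by parts so that the final integrand involves only $\bar U(q)$ rather than $\bar U'(q)$ or $q\bar U'(q)$. This requires $\Omega$ to be closed (e.g., the torus $\mathbb{T}^d$ as in Section \ref{sec3}) so that boundary terms vanish, and it uses the smoothness hypotheses on $\Phi_p$ so that $\nabla_x\pi$ can be paired classically with smooth test functions in $x$, rather than interpreted as the distributional derivative of the singular measure supported on the graph $\{(x,\nabla_x\Phi_p(x))\}$.
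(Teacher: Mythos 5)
Parts (i) and (ii) are correct and coincide with the paper's own argument: both are nothing more than unfolding $\pi=(\mathrm{id}\times\nabla\Phi_p)_{\#}q$ and its tensor square, exactly as you describe. The first two summands of part (iii) (the change of variables $y=\nabla_x\Phi_p(x)$ with the Monge--Amp\`ere relation, and the substitution $q(x)=\int_\Omega\pi(x,y)\,dy$) also match the paper.

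The gap is in the last step of part (iii). You correctly use the joint-density normalization $\int_\Omega y\,\pi(x,y)\,dy=q(x)\nabla_x\Phi_p(x)$, hence $\int_\Omega y\cdot\nabla_x\pi(x,y)\,dy=\nabla_x\cdot\big(q\nabla_x\Phi_p\big)=q\,\Delta_x\Phi_p+\nabla q\cdot\nabla_x\Phi_p$, and then assert that a Gauss--Green integration by parts against $\bar U(q)$ reassembles the mapping-side term $\int_\Omega(\Delta_x\Phi_p-d)\,\bar U(q)\,dx$. It does not close. Writing $B$ for an antiderivative of $\bar U$, so that $\nabla_x\big[B(q)\big]=\bar U(q)\nabla_x q$, the cross term gives $\int_\Omega(\nabla q\cdot\nabla_x\Phi_p)\,\bar U(q)\,dx=-\int_\Omega B(q)\,\Delta_x\Phi_p\,dx$ on the torus, and your expression collapses to $\int_\Omega\Delta_x\Phi_p\,\big[q\,\bar U(q)-B(q)\big]\,dx$. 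This equals $\int_\Omega\Delta_x\Phi_p\,\bar U(q)\,dx$ only when $(q-1)\bar U'(q)\equiv 0$, i.e.\ only for affine $U$; it already fails for $U(z)=z^2/2$ with non-uniform $q$. The reason no integration by parts can rescue this is that the paper's proof of (iii) contains none: it is a pure symbol substitution that reads $T(x)=\int_\Omega y\,\pi(x,y)\,dy$ as in the proof of Proposition \ref{BJD} (i.e.\ treating $\pi(x,\cdot)$ as the conditional law of $y$ given $x$, normalized to $1$ in $y$, in that factor, while using $q(x)=\int_\Omega\pi(x,y)\,dy$ in the $\bar U$ factor) and then commutes $\nabla_x\cdot$ with $\int\cdot\,dy$. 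Under your more literal reading of $\pi$ as the joint density with $x$-marginal $q$, the extra factor of $q$ and the cross term genuinely appear and do not cancel; to obtain the proposition as displayed you must adopt the paper's normalization convention for $\int_\Omega y\,\pi(x,y)\,dy$, after which the substitution is immediate and no integration by parts is needed. (The same conventional reading is also what keeps the $-d$ term from acquiring a factor $|\Omega|$ when it is placed inside the double integral.)
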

\begin{remark}[Comparisons with Wasserstein-Bregman divergence]
We notice that formulation (i) is similar but different with divergence functionals defined in \cite{GHY, Wong}. In our formulation, joint density $\pi$ is the optimal one from the $L^2$--Wasserstein space; while in \cite{GHY} or \cite{Wong}, joint density $\pi$ solves the related linear programming problem w.r.t. divergence type ground costs. 
\end{remark}
\begin{remark}
We remark that formulation (ii) leads to a quadratic programming problem with interaction potential divergence functions $\mathrm{D}_{\tilde W}$. 
\end{remark}
\begin{remark}
In information geometry \cite{IG} and information theory \cite{CoverThomas1991_elements}, the entropy functional $\mathcal{U}(p)$ is often named information entropy. Because of this reason, we call divergence functional \eqref{entropy} the {\em transport information Bregman divergence}. 
\end{remark}

Besides, we derive several closed form solutions of transport Bregman divergences in one dimensional space. 
\begin{proposition}[One dimensional closed form solutions]\label{prop9}
Let $\Omega=[0,1]$ or $\mathbb{T}^1$. Denote $F_p$, $F_q$ as the cumulative distribution of densities $p$, $q$, respectively, and let $F_p^{-1}$, $F_q^{-1}$ be
their inverse functions. We have the following closed formulas of transport Bregman divergences. 
\begin{itemize}
\item[(i)] The transport Bregman divergence of linear energy $\mathcal{V}$ \eqref{linear} satisfies  
\begin{equation*}
\begin{split}
\mathrm{D}_{\rt,\mathcal{V}}(p\|q)=\int_0^1 \mathrm{D}_{V}(F^{-1}_p(x)\|F^{-1}_q(x))dx.
\end{split}
\end{equation*}
\item[(ii)] The transport Bregman divergence of interaction energy $\mathcal{W}$ \eqref{interact} satisfies
\begin{equation*}
\begin{split}
\mathrm{D}_{\rt,\mathcal{W}}(p\|q)=&\frac{1}{2}\int_0^1\int_0^1 \mathrm{D}_{\tilde W}(F^{-1}_p(x)-F^{-1}_p(\tilde x)\|F^{-1}_q(x)-F^{-1}_q(\tilde x))dxd\tilde x.
\end{split}
\end{equation*}
\item[(iii)] The transport Bregman divergence of negative entropy $\mathcal{U}$ \eqref{entropy} satisfies 
\begin{equation}\label{ent1d}
\begin{split}
\mathrm{D}_{\rt,\mathcal{U}}(p\|q)=\int_0^1 \mathrm{D}_{\tilde U}(\nabla_xF_p^{-1}(x)\|\nabla_xF_q^{-1}(x))dx,
\end{split}
\end{equation}
where we denote 
\begin{equation*}
\tilde U(z)=zU(\frac{1}{z}),
\end{equation*}
and $\mathrm{D}_{\tilde U}$ is the Euclidean Bregman divergence of $\tilde U$ defined by 
\begin{equation*}
\mathrm{D}_{\tilde U}(z_1\|z_2)=\tilde U(z_1)-\tilde U(z_2)-\nabla_z\tilde U(z_2)\cdot (z_1-z_2).
\end{equation*}
\end{itemize}
\end{proposition}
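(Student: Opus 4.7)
The plan is to exploit the one-dimensional structure of optimal transport. On $\Omega=[0,1]$ or $\mathbb{T}^1$ the Brenier map from $q$ to $p$ is the monotone rearrangement, so that
\[
\nabla_x\Phi_p(x)=F_p^{-1}(F_q(x)),\qquad \nabla_x\Phi_q(x)=x=F_q^{-1}(F_q(x)).
\]
The change of variables $y=F_q(x)$ with $dy=q(x)\,dx$ is then applied uniformly in all three cases.

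For (i), substituting these expressions into \eqref{linear} and using $q(x)\,dx=dy$ transforms the integrand directly into $\mathrm{D}_V(F_p^{-1}(y)\|F_q^{-1}(y))$ on $[0,1]$, after which we relabel $y$ as $x$. For (ii), the same rearrangement is applied in both arguments $x,\tilde x$ of \eqref{interact}; the product substitution $y=F_q(x)$, $\tilde y=F_q(\tilde x)$ gives the stated formula immediately.

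For (iii) the argument is more delicate. Setting $y=F_q(x)$, $a:=\nabla_y F_p^{-1}(y)$, and $b:=\nabla_y F_q^{-1}(y)$, I first rewrite the one-dimensional Monge--Amp\`ere Jacobian and the reference density as
\[
\nabla^2_x\Phi_p(x)=(F_p^{-1})'(F_q(x))\,q(x)=\frac{a}{b},\qquad q(x)=\frac{1}{b}.
\]
A direct substitution turns the three pieces of the integrand of \eqref{entropy} into $U(1/a)\cdot(a/b)=\tilde U(a)/b$, $U(1/b)=\tilde U(b)/b$, and $(a/b-1)[U'(1/b)/b-U(1/b)]=-(a-b)\tilde U'(b)/b$, where the last equality uses the identity $\tilde U'(b)=U(1/b)-U'(1/b)/b$ obtained by differentiating $\tilde U(z)=zU(1/z)$. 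Collecting terms, the integrand equals $\mathrm{D}_{\tilde U}(a\|b)/b$. Since $dx=b\,dy$, the factor $1/b$ cancels and the integral becomes $\int_0^1 \mathrm{D}_{\tilde U}(\nabla_y F_p^{-1}(y)\|\nabla_y F_q^{-1}(y))\,dy$, which is precisely \eqref{ent1d} after relabelling.

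The main obstacle is the algebraic bookkeeping in (iii): one must carefully re-express the Jacobian, the reference density, and the quantity $U'(q)q-U(q)$ in terms of the dual variables $a,b$ and the dual potential $\tilde U$. The key identity $U'(q)q-U(q)=-\tilde U'(b)$ is what causes the entire expression to collapse into a Bregman divergence of the dual potential $\tilde U(z)=zU(1/z)$; cases (i) and (ii) are then essentially bookkeeping exercises in the change of variables.
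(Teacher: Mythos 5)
Your proposal is correct and follows essentially the same route as the paper's own proof: the monotone rearrangement $T=F_p^{-1}\circ F_q$, the change of variables $y=F_q(x)$ with $q(x)\,dx=dy$ for (i) and (ii), and for (iii) the substitution $q=1/b$, $\nabla_x T=a/b$ together with the identity $\tilde U'(z)=U(1/z)-U'(1/z)/z$, which collapses the integrand into $\mathrm{D}_{\tilde U}(a\|b)/b$. The bookkeeping in your part (iii) matches the paper's computation step for step, so there is nothing to add.
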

\subsection{Examples}
In this subsection, we present several examples of transport Bregman divergences. 
\begin{example}[Linear energy]
We remark that the transport Bregman divergence of linear energy \eqref{linear} is the expectation of a ``linear Bregman divergence potential''. Several examples are given below.  
\begin{itemize}
\item[(i)] If $\mathcal{V}(\rho)=\int_\Omega \|x\|^2 p(x)dx$, then
\begin{equation*}
\begin{split}
\mathrm{D}_{\rt,\mathcal{V}}(p\|q)=&\int_\Omega \Big(\|T(x)\|^2-\|x\|^2-2(T(x)-x, x)\Big)q(x)dx\\
=&\int_\Omega\|T(x)-x\|^2q(x)dx\\
=&\int_\Omega \|\nabla_x\Phi_p(x)-\nabla_x\Phi_q(x)\|^2q(x)dx\\
=&\int_\Omega\int_\Omega \|y-x\|^2\pi(x,y)dxdy\\
=&\mathrm{Dist}_{\rt}(p,q)^2.
\end{split}
\end{equation*}
In this case, we observe that when $\mathcal{V}(p)=\int_\Omega \|x\|^2 p(x)dx$ is the second moment functional, the transport Bregman divergence forms the $L^2$--Wasserstein distance. In other words, we can view $L^2$--Wasserstein distance as the \textbf{transport Bregman divergence of a second moment functional}. If $d=1$, then 
\begin{equation*}
\mathrm{D}_{\rt,\mathcal{V}}(p\|q)=\int_0^1 \|F^{-1}_p(x)-F^{-1}_q(x)\|^2dx.
\end{equation*}
This is a known closed form solution for $L^2$--Wasserstein distance \cite{AGS}. 
\item[(ii)] If $\Omega=[0,1]$ and $\mathcal{V}(p)=\int_\Omega x\log x p(x)dx$, then  
\begin{equation*}
\begin{split}
\mathrm{D}_{\rt,\mathcal{V}}(p\|q)
=&\int_\Omega\Big(\nabla_x\Phi_p(x)\log\frac{\nabla_x\Phi_p(x)}{\nabla_x\Phi_q(x)}-\big(\nabla_x\Phi_p(x)-\nabla_x\Phi_q(x)\big)\Big)q(x)dx\\
=&\int_\Omega\int_\Omega \Big(y\log\frac{y}{x}-(y-x)\Big)\pi(x,y)dxdy\\
=&\int_0^1\Big(F_p^{-1}(x)\log\frac{F_p^{-1}(x)}{F_q^{-1}(x)}-\big(F_p^{-1}(x)-F^{-1}_q(x)\big)\Big)dx.
\end{split}
\end{equation*}
\item[(iii)] If $\Omega=[0,1]$ and $\mathcal{V}(p)=-\int_\Omega\log x p(x)dx$, then  
\begin{equation*}
\begin{split}
\mathrm{D}_{\rt,\mathcal{V}}(p\|q)=&\int_\Omega\Big(\frac{\nabla_x\Phi_p(x)}{\nabla_x\Phi_q(x)}-\log\frac{\nabla_x\Phi_p(x)}{\nabla_x\Phi_q(x)}-1\Big)q(x)dx\\
=&\int_\Omega\int_\Omega \Big(\frac{y}{x}-\log\frac{y}{x}-1\Big)\pi(x,y)dxdy\\
=&\int_0^1\Big(\frac{F_p^{-1}(x)}{F_q^{-1}(x)}-\log\frac{F_p^{-1}(x)}{F_q^{-1}(x)}-1\Big)dx.
\end{split}
\end{equation*}
\end{itemize}
\end{example}

\begin{example}[Interaction energy]
We remark that the transport Bregman divergence of interaction energy \eqref{interact} is the expectation of an ``interaction Bregman divergence potential''. Two examples are given below. 
\begin{itemize}
\item[(i)] If $\mathcal{W}(p)=\int_\Omega\int_\Omega\|x-\tilde x\|^2p(x)p(\tilde x)dxd\tilde x$, then 
\begin{equation*}
\begin{split}
\mathrm{D}_{\rt,\mathcal{W}}(p\|q)=&\int_\Omega\int_\Omega \|(\nabla \Phi_p(x)-\nabla\Phi_p(\tilde x))-(\nabla\Phi_q(x)-\nabla\Phi_q(\tilde x))\|^2q(x)q(\tilde x)dxd\tilde x\\
=&\int_\Omega\int_\Omega \|(y-\tilde y)-(x-\tilde x)\|^2\pi(x,y)\pi(\tilde x, \tilde y)dxdyd\tilde xd\tilde y.
\end{split}
\end{equation*}
If $d=1$, then  
\begin{equation*}
\mathrm{D}_{\rt,\mathcal{W}}(p\|q)=\int_0^1\int_0^1 \|(F_p^{-1}(x)-F_p^{-1}(\tilde x))-(F_q^{-1}(x)-F_q^{-1}(\tilde x))\|^2dxd\tilde x.
\end{equation*}
\item[(ii)] If $\mathcal{W}(p)=\int_\Omega\int_\Omega\log\frac{1}{\|x-\tilde x\|}p(x)p(\tilde x)dxd\tilde x$, then 
\begin{equation*}
\begin{split}
\mathrm{D}_{\rt,\mathcal{W}}(p\|q)=&\int_\Omega\int_\Omega \Big(\frac{\|\nabla_x\Phi_p(x)-\nabla\Phi_p(\tilde x)\|}{\|\nabla\Phi_q(x)-\nabla\Phi_q(\tilde x)\|}-\log\frac{\|\nabla_x\Phi_p(x)-\nabla\Phi_p(\tilde x)\|}{\|\nabla\Phi_q(x)-\nabla\Phi_q(\tilde x)\|}-1\Big)q(x)q(\tilde x)dxd\tilde x\\
=&\int_\Omega\int_\Omega\int_\Omega\int_\Omega \Big(\frac{\|y-\tilde y\|}{\|x-\tilde x\|}-\log\frac{\|y-\tilde y\|}{\|x-\tilde x\|}-1\Big)\pi(x, y)\pi(\tilde x, \tilde y)dxdyd\tilde xd\tilde y.
\end{split}
\end{equation*}
If $d=1$, then  
\begin{equation*}
\mathrm{D}_{\rt,\mathcal{W}}(p\|q)=\int_0^1\int_0^1 \Big(\frac{\|F_p^{-1}(x)-F_p^{-1}(\tilde x)\|}{\|F_q^{-1}(x)-F_q^{-1}(\tilde x)\|}-\log\frac{\|F_p^{-1}(x)-F_p^{-1}(\tilde x)\|}{\|F_q^{-1}(x)-F_q^{-1}(\tilde x)\|}-1\Big)dxd\tilde x.
\end{equation*}
\end{itemize}
\end{example}
\begin{example}[Negative Entropy]\label{ent}
We remark that the transport Bregman divergence of entropy \eqref{entropy} is the expectation of a matrix Bregman divergence for mapping Jacobi operators.
Two examples are given below. 
\begin{itemize}
\item[(i)] Let $\mathcal{U}(p)=\int_\Omega p(x)\log p(x)dx$.  
Then the transport Bregman divergence forms 
\begin{equation*}
\begin{split}
\mathrm{D}_{\rt,\mathcal{U}}(p\|q)
=&\int_\Omega \Big(\Delta_x\Phi_p(x)-\log\mathrm{det}(\nabla^2_x\Phi_p(x))-d\Big)q(x)dx\\
=&\int_\Omega \Big(p(x)\log p(x)-q(x)\log q(x)\Big)dx +\int_\Omega\int_\Omega (y\cdot\nabla_x\frac{\pi(x,y)}{q(x)}-d)q(x)dxdy.
\end{split}
\end{equation*}
If $d=1$, then
\begin{equation}\label{ED}
\mathrm{D}_{\rt,\mathcal{U}}(p\|q)=\int_0^1\Big(\frac{\nabla_xF_p^{-1}(x)}{\nabla_xF_q^{-1}(x)}-\log \frac{\nabla_xF_p^{-1}(x)}{\nabla_xF_q^{-1}(x)}-1\Big)dx.
\end{equation}
\item[(ii)] 
Let $\mathcal{U}(p)=\int_\Omega \frac{p(x)^2}{2}dx$. Then the transport Bregman divergence forms
\begin{equation*}
\begin{split}
\mathrm{D}_{\rt,\mathcal{U}}(p\|q)=&\frac{1}{2}\int_\Omega\Big(\Delta_x \Phi_p(x)+\frac{1}{\mathrm{det}(\nabla^2_x\Phi_p(x))}-1-d\Big)q(x)^2dx\\
=&\frac{1}{2}\int_\Omega \Big(p(x)^2-q(x)^2\Big)dx+\frac{1}{2}\int_\Omega\int_\Omega (y\cdot\nabla_x\frac{\pi(x,y)}{q(x)}-d)q(x)^2dxdy.
\end{split}
\end{equation*}
If $d=1$, then 
\begin{equation*}
\mathrm{D}_{\rt,\mathcal{U}}(p\|q)=\frac{1}{2}\int_0^1 \Big(\frac{1}{\nabla_xF_p^{-1}(x)}-\frac{1}{\nabla_xF_q^{-1}(x)}\Big)^2\cdot\nabla_xF^{-1}_p(x)dx.
\end{equation*}
\end{itemize}
\end{example}

\section{Properties}\label{sec4}
In this section, we study several properties of transport Bregman divergences. 

\begin{proposition}[Properties]\label{TBP}
Transport Bregman divergence $\mathrm{D}_{\rt,\mathcal{F}}$ has the following properties.
\begin{itemize}
\item[(i)] Non-negativity: Suppose $\mathcal{F}$ is displacement convex, then 
\begin{equation*}
\mathrm{D}_{\rt,\mathcal{F}}(p\|q)\geq 0.
\end{equation*}
Suppose $\mathcal{F}$ is strictly displacement convex, then 
\begin{equation*}
\mathrm{D}_{\rt, \mathcal{F}}(p\|q)=0\quad \textrm{if and only if}\quad \mathrm{Dist}_{\rt}(p,q)=0.
\end{equation*}
\item[(ii)] Transport Hessian metric: Consider a Taylor expansion as follows. Denote $\sigma=-\nabla\cdot(q\nabla\Phi)\in T_q\mathcal{P}(\Omega)$ and $\epsilon\in\mathbb{R}$,
then  
\begin{equation*}
\mathrm{D}_{\rt,\mathcal{F}}((\mathrm{id}+\epsilon \nabla\Phi )_\# q \|q)=\frac{\epsilon^2}{2}\mathrm{Hess}_{\mathrm{T}}\mathcal{F}(q)(\sigma,\sigma)+o(\epsilon^2),
\end{equation*}
where $\mathrm{id}\colon \Omega\rightarrow\Omega$ is the identical map, $\mathrm{id}(x)=x$, and $\mathrm{Hess}_{\mathrm{T}}\mathcal{F}(q)$ is defined in \eqref{Hessian}, which is the Hessian operator of functional $\mathcal{F}$ at $q\in\mathcal{P}(\Omega)$ w.r.t. $L^2$--Wasserstein metric.
\item[(iii)] Linearity: Consider two functionals $\mathcal{F},~\mathcal{G}\colon\mathcal{P}\rightarrow\mathbb{R}$ and $a>0$. Then 
\begin{equation*}
\mathrm{D}_{\rt,\mathcal{F}+a\mathcal{G}}(p\|q)=\mathrm{D}_{\rt,\mathcal{F}}(p\|q)+a\mathrm{D}_{\rt, \mathcal{G}}(p\|q).
\end{equation*}
\item[(iv)] Asymmetry: In general, $\mathrm{D}_{\rt,\mathcal{F}}(p\|q)\neq \mathrm{D}_{\rt,\mathcal{F}}(q\|p)$. One typical example is formula \eqref{ED}, which is an Itakura--Saito divergence in transport coordinates. 
\end{itemize}
\end{proposition}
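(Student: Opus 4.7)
The plan is to reduce the four assertions to a single basic identity: along the Wasserstein geodesic
\[
\rho_t:=((1-t)\mathrm{id}+t\nabla_x\Phi_p)_\#q,\qquad t\in[0,1],
\]
connecting $\rho_0=q$ to $\rho_1=p$, the initial-time derivative of $\mathcal{F}$ coincides with the integral term in \eqref{TB}. To derive this identity I would differentiate $t\mapsto \mathcal{F}(\rho_t)$ using the continuity equation $\partial_t\rho_t+\nabla_x\cdot(\rho_tv_t)=0$, note that the initial velocity field is $v_0(x)=\nabla_x\Phi_p(x)-x$, and integrate by parts, which gives
\begin{equation*}
\tfrac{d}{dt}\big|_{t=0}\mathcal{F}(\rho_t)=\int_\Omega\Big(\nabla_x\tfrac{\delta}{\delta q(x)}\mathcal{F}(q),\;\nabla_x\Phi_p(x)-x\Big)q(x)\,dx.
\end{equation*}
Consequently $\mathrm{D}_{\rt,\mathcal{F}}(p\|q)=\mathcal{F}(\rho_1)-\mathcal{F}(\rho_0)-\tfrac{d}{dt}\big|_{t=0}\mathcal{F}(\rho_t)$, and claim (i) is the standard ``tangent line lies below the graph'' inequality for the convex (respectively strictly convex) function $t\mapsto\mathcal{F}(\rho_t)$ on $[0,1]$: equality forces the geodesic to be trivial, i.e. $\nabla_x\Phi_p(x)=x$ $q$-a.e., which by \eqref{nc} is equivalent to $\mathrm{Dist}_{\rt}(p,q)=0$.

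For (ii) I would set $p=(\mathrm{id}+\epsilon\nabla_x\Phi)_\#q$ and observe that, for $|\epsilon|$ sufficiently small, $\tfrac{\|x\|^2}{2}+\epsilon\Phi(x)$ is strictly convex, so $\mathrm{id}+\epsilon\nabla_x\Phi$ is exactly the Brenier map between $q$ and $p$, giving $\nabla_x\Phi_p(x)-x=\epsilon\nabla_x\Phi(x)$. Substituting this into \eqref{TB} reduces the claim to the scalar expansion
\begin{equation*}
\mathrm{D}_{\rt,\mathcal{F}}(p\|q)=\mathcal{F}(\rho_\epsilon)-\mathcal{F}(\rho_0)-\epsilon\,\tfrac{d}{dt}\big|_{t=0}\mathcal{F}(\rho_t),
\end{equation*}
where $\rho_t=(\mathrm{id}+t\epsilon\nabla_x\Phi)_\#q$. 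The $O(\epsilon)$ contributions cancel by construction, leaving the $O(\epsilon^2)$ coefficient $\tfrac12\tfrac{d^2}{dt^2}\big|_{t=0}\mathcal{F}(\rho_t)$. The remaining step is to identify this second $t$-derivative with $\mathrm{Hess}_{\mathrm{T}}\mathcal{F}(q)(\sigma,\sigma)$ for $\sigma=-\nabla_x\cdot(q\nabla_x\Phi)$; this is precisely formula \eqref{Hessian} of Proposition \ref{prop4} (iii), obtained by differentiating $\mathcal{F}(\rho_t)$ twice along the constant-velocity geodesic and splitting the result into the ``$\tfrac{\delta^2}{\delta q\delta q}$'' and ``$\nabla_x^2\tfrac{\delta}{\delta q}$'' contributions.

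The linearity (iii) is immediate from \eqref{TB}: each of $\mathcal{F}(p)$, $\mathcal{F}(q)$, and $\nabla_x\tfrac{\delta}{\delta q(x)}\mathcal{F}(q)$ depends linearly on $\mathcal{F}$, while the optimal map $\nabla_x\Phi_p$ depends only on the pair $(p,q)$ and is unchanged when $\mathcal{F}$ is replaced by $\mathcal{F}+a\mathcal{G}$. Claim (iv) requires no proof beyond a witness: formula \eqref{ED} inherits its asymmetry directly from the Euclidean Itakura--Saito divergence \eqref{IS}, so swapping the roles of $F_p^{-1}$ and $F_q^{-1}$ changes the value, and any concrete non-identical pair $(p,q)$ supplies a counterexample.

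The main obstacle I expect is the rigorous implementation of step (ii): both justifying that $\mathrm{id}+\epsilon\nabla_x\Phi$ is the Brenier map from $q$ to its pushforward for small $\epsilon$, and matching the $O(\epsilon^2)$ Taylor coefficient against \eqref{Hessian} term by term, since the latter requires differentiating the pushforward and the first variation of $\mathcal{F}$ in a compatible way. Everything else is a clean translation of the Euclidean Bregman-divergence calculus into geodesic language on $(\mathcal{P}(\Omega),g_{\mathrm{T}})$.
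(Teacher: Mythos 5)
Your proposal is correct and follows essentially the same route as the paper: part (i) is the first-order geodesic (displacement) convexity inequality $\mathcal{F}(p)\geq\mathcal{F}(q)+\frac{d}{dt}\big|_{t=0}\mathcal{F}(\rho_t)+\frac{\lambda}{2}\mathrm{Dist}_{\rt}(p,q)^2$, part (ii) is the Taylor expansion of $\mathcal{F}$ along $(\mathrm{id}+\epsilon\nabla\Phi)_{\#}q$ with the first-order terms cancelling against the subtracted linear term, part (iii) is linearity of the Wasserstein gradient, and part (iv) is witnessed by the Itakura--Saito example. The only substantive difference is that you make explicit the verification that $\mathrm{id}+\epsilon\nabla_x\Phi$ is the optimal (Brenier) map from $q$ to its pushforward for small $\epsilon$, a point the paper's proof of (ii) uses implicitly when identifying $T(x)-x$ with $\epsilon\nabla_x\Phi(x)$.
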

\begin{proof}
The proof follows from the definition of transport Bregman divergence. (i) Since $\mathcal{F}$ is $\lambda$--geodesic convex in $(\mathcal{P}, g_{\mathrm{T}})$, then 
\begin{equation*}
\mathcal{F}(p)\geq \mathcal{F}(q)+\int_\Omega \textrm{grad}_{\mathrm{T}}\mathcal{F}(q)(x) \cdot \frac{\delta}{\delta q(x)}\mathrm{Dist}_{\rt}(p,q)^2dx+\frac{\lambda}{2}\mathrm{Dist}_{\rt}(p,q)^2,
\end{equation*}
i.e. 
\begin{equation*}
\mathrm{D}_{\rt, \mathcal{F}}(p\|q)\geq \frac{\lambda}{2}\mathrm{Dist}_{\rt}(p,q)^2.
\end{equation*}
Hence if $\lambda\geq 0$, we have $\mathrm{D}_{\rt, \mathcal{F}}(p\|q)\geq 0$. In particular, if $\lambda>0$, then $\mathrm{D}_{\rt,\mathcal{F}}(p\|q)=0$ iff $\mathrm{Dist}_{\mathrm{T}}(p,q)^2=0$. 

(ii) Here the Hessian metric property follows from the definition. Consider a Taylor expansion in $(\mathcal{P}, g_{\mathrm{T}})$ by
\begin{equation*}
\mathcal{F}((\mathrm{id}+\epsilon \nabla\Phi)_\#q)=\mathcal{F}(q)+\epsilon \int_\Omega \textrm{grad}_{\mathrm{T}}\mathcal{F}(q)(x) \cdot \Phi(x)dx+\frac{\epsilon^2}{2}\mathrm{Hess}_{\mathrm{T}}\mathcal{F}(q)(\sigma, \sigma)+o(\epsilon^2).
\end{equation*}
From the definition of transport Bregman divergences, we have
\begin{equation*}
\mathrm{D}_{\rt,\mathcal{F}}((\mathrm{id}+\epsilon \nabla\Phi)_\#q\|q)=\frac{\epsilon^2}{2}\mathrm{Hess}_{\mathrm{T}}\mathcal{F}(q)(\sigma, \sigma)+o(\epsilon^2).
\end{equation*}

(iii) Here, the linearity follows from the linearity of gradient operator in $L^2$--Wasserstein metric. Notice that for any $a>0$, we have
\begin{equation*}
\begin{split}
\textrm{grad}_{\mathrm{T}}\mathcal{F}(q)+a\textrm{grad}_{\mathrm{T}}\mathcal{G}(q)
=&-\nabla\cdot\Big(q(x)\nabla_x\frac{\delta}{\delta q(x)}\mathcal{F}(q)\Big)-a\nabla\cdot\Big(q(x)\nabla_x\frac{\delta}{\delta q(x)}\mathcal{G}(q)\Big)\\
=&-\nabla\cdot\Big(q(x)\nabla_x\frac{\delta}{\delta q(x)} (\mathcal{F}(q)+a \mathcal{G}(q))\Big)\\
=&\textrm{grad}_{\mathrm{T}}(\mathcal{F}+a\mathcal{G})(q).
\end{split}
\end{equation*}
Following Proposition \ref{prop5}, we finish the proof.
\end{proof}
We next formulate a duality theorem for transport Bregman divergences. Denote $C^{\infty}(\Omega)/\mathbb{R}$ be a smooth functional space, whose element is uniquely determined up to a constant shrift. Consider a functional $\mathcal{F}\colon \mathcal{P}(\Omega)\rightarrow\mathbb{R}$. 
Denote $\mathcal{\tilde F}\colon C^{\infty}(\Omega)/\mathbb{R}\rightarrow\mathbb{R}$, such that
\begin{equation}\label{def}
\mathcal{\tilde F}(\Phi)=\mathcal{F}(\nabla\Phi_\# q).
\end{equation}
\begin{theorem}[Transport duality]\label{thm}
Assume that functional $\mathcal{\tilde F}$ is convex w.r.t. $\Phi$ in $L^2$ space.
Denote $\mathcal{F}^*\colon C^{\infty}(\Omega)/\mathbb{R}\rightarrow \mathbb{R}$, such that
\begin{equation}\label{BD}
\mathcal{\tilde F}^*(\Phi^*)=\sup_{\Phi\in C^{\infty}(\Omega)/\mathbb{R}}~\int_\Omega\Phi(x)\Phi^*(x)dx-\mathcal{\tilde F}(\Phi).
\end{equation}
Then the following relations hold. 
\begin{itemize}
\item[(i)]
\begin{equation*}
\nabla_x\frac{\delta}{\delta\Phi^*(x)}\mathcal{\tilde F^*}(\frac{\delta}{\delta\Phi}\mathcal{\tilde F}(\Phi))=\nabla_x\Phi(x), 
\end{equation*}
for any $\Phi\in C^{\infty}(\Omega)/\mathbb{R}$. 
\item[(ii)] 
\begin{equation*}
\mathrm{D}_{\rt,\mathcal{F}}(p\|q)=\mathrm{D}_{\mathcal{ \mathcal{\tilde F}}}(\Phi_p\|\Phi_q),
\end{equation*}
where $\mathrm{D}_{\mathcal{\tilde F}}$ is a Bregman divergence in $L^2$ space satisfying 
\begin{equation*}
\mathrm{D}_{\mathcal{\tilde F}}(p\|q)= \mathcal{\tilde F}(\Phi_p)-\mathcal{\tilde F}(\Phi_q)-\int_\Omega (\Phi_p(x)-\Phi_q(x))\cdot\frac{\delta}{\delta\Phi_q(x)}\mathcal{\tilde F}(\Phi_q)dx.
\end{equation*}
\item[(iii)] 
\begin{equation*}
\mathrm{D}_{\mathcal{ \mathcal{\tilde F}}}(\Phi_p\|\Phi_q)=\mathrm{D}_{\mathcal{\tilde F}^*}(\Phi^*_q\|\Phi^*_p)=\mathrm{D}_{\rt,\mathcal{F}}(p\|q),
\end{equation*}
where $\Phi_p^*$, $\Phi_q^*\in C^{\infty}(\Omega)$ satisfy
\begin{equation*}
\Phi_p^*(x)=\frac{\delta}{\delta\Phi_p(x)}\mathcal{\tilde F}(\Phi_p),\quad \Phi_q^*(x)=\frac{\delta}{\delta\Phi_q(x)}\mathcal{\tilde F}(\Phi_q),
\end{equation*}
and $\mathrm{D}_{\mathcal{\tilde F^*}}$ is a Bregman divergence in $L^2$ space, such that
\begin{equation*}
\mathrm{D}_{\mathcal{\tilde F^*}}(\Phi_q^*\|\Phi_p^*)= \mathcal{\tilde F}^*(\Phi_q)-\mathcal{\tilde F}^*(\Phi_p)-\int_\Omega \big(\Phi^*_q(x)-\Phi^*_p(x)\big)\cdot\frac{\delta}{\delta\Phi^*_p(x)}\mathcal{\tilde F}^*(\Phi^*_p)dx.
\end{equation*}
\end{itemize}
\end{theorem}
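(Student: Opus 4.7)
The three claims split naturally and I would prove them in the order (ii), (i), (iii). Claim (ii) is the only piece that really uses the transport structure; (i) is a standard $L^2$ Legendre inversion, and (iii) then follows by applying the classical Bregman--Legendre duality in $L^2$ to $\tilde{\mathcal{F}}$ and combining with (ii).

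For (ii), the core step is a chain-rule computation identifying the $L^2$ first variation of $\tilde{\mathcal{F}}$ at $\Phi=\Phi_q=\|x\|^2/2$ with the Wasserstein gradient of $\mathcal{F}$ at $q$. Since $\nabla\Phi_q=\mathrm{id}$, perturbing $\Phi_q\mapsto\Phi_q+\epsilon\eta$ pushes $q$ forward under $\mathrm{id}+\epsilon\nabla\eta$, and the linearized continuity equation gives $\partial_\epsilon\rho_\epsilon|_{\epsilon=0}=-\nabla_x\cdot(q\nabla_x\eta)$. Chain rule and integration by parts then yield
\begin{equation*}
\frac{\delta\tilde{\mathcal{F}}}{\delta\Phi_q(x)}=-\nabla_x\cdot\!\Big(q(x)\nabla_x\frac{\delta\mathcal{F}}{\delta q(x)}\Big)=\mathrm{grad}_{\mathrm{T}}\mathcal{F}(q)(x).
\end{equation*}
Plugging this into the $L^2$ Bregman divergence $\mathrm{D}_{\tilde{\mathcal{F}}}(\Phi_p\|\Phi_q)$, using $\tilde{\mathcal{F}}(\Phi_p)=\mathcal{F}(p)$, $\tilde{\mathcal{F}}(\Phi_q)=\mathcal{F}(q)$, applying integration by parts once more, and substituting $\nabla_x\Phi_q(x)=x$, reproduces exactly Definition~\ref{def4} of $\mathrm{D}_{\rt,\mathcal{F}}(p\|q)$. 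This uses Proposition~\ref{prop5} as a consistency check.

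For (i), convexity of $\tilde{\mathcal{F}}$ in $L^2$ gives $\tilde{\mathcal{F}}^{**}=\tilde{\mathcal{F}}$, hence the Fenchel reciprocity $\Phi^*=\delta\tilde{\mathcal{F}}/\delta\Phi(\Phi)\Longleftrightarrow\Phi=\delta\tilde{\mathcal{F}}^*/\delta\Phi^*(\Phi^*)$, which holds as elements of $C^\infty(\Omega)/\mathbb{R}$. Applying $\nabla_x$ (which kills the ambiguous additive constant) gives (i). For (iii), I would quote the classical Euclidean identity $\mathrm{D}_\psi(y\|x)=\mathrm{D}_{\psi^*}(x^*\|y^*)$, whose proof is a one-line rearrangement using $\psi(x)+\psi^*(x^*)=\langle x,x^*\rangle$ at conjugate pairs. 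The same algebra goes through verbatim with the $L^2$ pairing $\int\Phi\,\Phi^*dx$ replacing the Euclidean inner product, yielding $\mathrm{D}_{\tilde{\mathcal{F}}}(\Phi_p\|\Phi_q)=\mathrm{D}_{\tilde{\mathcal{F}}^*}(\Phi_q^*\|\Phi_p^*)$; chaining with (ii) finishes the theorem. The main obstacle is the variational computation in (ii): one needs to justify the chain rule carefully at the special point $\Phi=\Phi_q$ where $\nabla\Phi_q$ is the identity, and to handle the fact that $\Phi$'s are defined only up to an additive constant so that all of the above statements live on $C^\infty(\Omega)/\mathbb{R}$. Once that variational identity is in hand, the remainder is bookkeeping.
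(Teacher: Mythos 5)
Your proposal is correct and takes essentially the same route as the paper: the heart of the argument is the variational identity $\frac{\delta}{\delta\Phi_q(x)}\mathcal{\tilde F}(\Phi_q)=-\nabla_x\cdot\big(q(x)\nabla_x\frac{\delta}{\delta q(x)}\mathcal{F}(q)\big)$, which the paper also obtains by perturbing $\Phi_q\mapsto\Phi_q+\epsilon h$ and differentiating the pushforward (following Lemma 10.4.1 of the cited gradient-flows reference). Your treatment of (i) via $\mathcal{\tilde F}=\mathcal{\tilde F}^{**}$ and of (iii) via the standard Bregman--Legendre rearrangement with the $L^2$ pairing matches the paper's proof step for step.
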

\begin{proof}
We notice that using transport coordinates \eqref{nc}, the $L^2$--Wasserstein space is flat. And the proof here follows from the duality of transport coordinates in $L^2$ space. 

\noindent(i) Consider  
\begin{equation*}
\Phi=\arg\sup_{\Phi\in C^\infty(\Omega)/\mathbb{R}}~\int_\Omega\Phi(x)\Phi^*(x)dx-\mathcal{\tilde F}(\Phi).
\end{equation*}
Then 
\begin{equation*}
\Phi^*=\frac{\delta}{\delta \Phi}\mathcal{\tilde F}(\Phi).
\end{equation*}
Also, consider 
\begin{equation*}
\mathcal{\tilde F}(\Phi)=\mathcal{\tilde F}^{**}(\Phi)=\sup_{\Phi\in C^{\infty}(\Omega)/\mathbb{R}}~\int_\Omega\Phi(x)\Phi^*(x)dx-\mathcal{\tilde F^*}(\Phi^*).
\end{equation*}
Hence 
\begin{equation*}
\Phi=\frac{\delta}{\delta\Phi^*}\mathcal{\tilde F}^*(\Phi^*)=\frac{\delta}{\delta\Phi^*}\mathcal{\tilde F}^*(\frac{\delta}{\delta\Phi}\mathcal{\tilde F}(\Phi)),
\end{equation*}
where the equality in above formula holds up to a constant shrift. 

\noindent(ii) We next prove $\mathrm{D}_{\rt,\mathcal{F}}(p\|q)=\mathrm{D}_{\mathcal{\tilde F^*}}(\Phi_p\|\Phi_q)$. From the definition of functional $\mathcal{\tilde F}$, it suffices to prove that 
\begin{equation}\label{aa}
\frac{\delta}{\delta\Phi_q(x)}\mathcal{\tilde F}(\Phi_q)=-\nabla_x\cdot(q(x)\nabla_x\frac{\delta}{\delta q(x)}\mathcal{F}(q)).
\end{equation}
Here the proof follows \cite[Lemma 10.4.1]{AGS}. For the completeness of paper, we derive \eqref{aa} here. Consider a small perturbation $h\in C^{\infty}(\Omega)$ with $\epsilon\in\mathbb{R}$, then 
\begin{equation*}
\mathcal{\tilde F}(\Phi_q+\epsilon h)=\mathcal{F}((\nabla\Phi_q+\epsilon \nabla h)_\#q)=\mathcal{F}((\mathrm{id}+\epsilon \nabla h)_\#q),
\end{equation*}
where $\mathrm{id}(x)=x$ is an identity map. Standard calculations show that 
\begin{equation*}
\begin{split}
\frac{d}{d\epsilon}\mathcal{\tilde F}(\Phi_q+\epsilon h)|_{\epsilon=0}=&\int_\Omega (\nabla_x h(x),\nabla_x\frac{\delta}{\delta q(x)}\mathcal{F}(q)) q(x)dx\\
=&-\int_\Omega h(x) \nabla_x\cdot(q(x)\nabla_x\frac{\delta}{\delta q(x)}\mathcal{F}(q))dx,
\end{split}
\end{equation*}
which finishes the proof.

\noindent(iii) Given $p$, $q\in\Omega$, denote $\Phi^*_p=\frac{\delta}{\delta\Phi_p}\mathcal{\tilde F}(\Phi_p)$, $\Phi^*_q=\frac{\delta}{\delta\Phi_q}\mathcal{\tilde F}(\Phi_q)$. 
We only need to prove the duality of Bregman divergence $\mathrm{D}_{\mathcal{\tilde{F}^*}}(\Phi_p\|\Phi_q)$ in $L^2$ space. On the one hand, 
\begin{equation*}
\begin{split}
\mathrm{D}_{\mathcal{\tilde F}}(\Phi_p\|\Phi_q)=& \mathcal{\tilde F}(\Phi_p)-\mathcal{\tilde F}(\Phi_q)-\int_\Omega(\Phi_p(x)-\Phi_q(x))\cdot \frac{\delta}{\delta\Phi_q(x)}\mathcal{\tilde F}(\Phi_q) dx\\
=& \mathcal{\tilde F}(\Phi_p)+\Big(\int_\Omega \Phi_q(x)\cdot\frac{\delta}{\delta\Phi_q(x)}\mathcal{\tilde F}(\Phi_q)dx-\mathcal{\tilde F}(\Phi_q)\Big)-\int_\Omega \Phi_p(x)\cdot\frac{\delta}{\delta\Phi_q}\mathcal{\tilde F}(\Phi_q) dx\\
=& \mathcal{\tilde F}(\Phi_p)+\mathcal{\tilde F}^*(\Phi_q^*)-\int_\Omega \Phi_p(x)\Phi_q^*(x) dx,
\end{split}
\end{equation*}
where we apply the definition of conjugate functional $\mathcal{\tilde F}^*$ in $L^2$ space. 
On the other hand, 
\begin{equation*}
\begin{split}
\mathrm{D}_{\mathcal{\tilde F}^*}(\Phi^*_q\|\Phi^*_p)=& \mathcal{\tilde F^*}(\Phi_q^*)-\mathcal{\tilde F^*}(\Phi_p)-\int_\Omega(\Phi^*_q(x)-\Phi^*_p(x))\cdot\frac{\delta}{\delta\Phi^*_p(x)}\mathcal{\tilde F^*}(\Phi^*_p) dx\\
=& \mathcal{\tilde F^*}(\Phi^*_q)+\Big(\int_\Omega \Phi_p^*(x)\cdot\frac{\delta}{\delta\Phi_p^*(x)}\mathcal{\tilde F^*}(\Phi^*_p)dx-\mathcal{\tilde F}(\Phi^*_p)\Big)-\int_\Omega \Phi_q^*(x)\cdot\frac{\delta}{\delta\Phi^*_p}\mathcal{\tilde F^*}(\Phi^*_p) dx\\
=& \mathcal{\tilde F^*}(\Phi_q)+\mathcal{\tilde F}(\Phi_p)-\int_\Omega \Phi_p(x)\Phi_q^*(x) dx,
\end{split}
\end{equation*}
where we apply the definition of conjugate of conjugate functional $\mathcal{\tilde F}=\mathcal{\tilde F}^{**}$ in $L^2$ space. This finishes the proof. 
\end{proof}
\begin{remark}
We emphasize the fact that the Legendre duality in Theorem \ref{thm} represents the one in probability density space $\mathcal{P}(\Omega)$. It is different from the Legendre duality of the ground cost defined in sample space $\Omega$. 
\end{remark}
\section{Transport KL divergence}\label{section4}
In this section, we study the transport Bregman divergence of negative Boltzmann--Shannon entropy, namely transport KL divergence. 
\subsection{Review of KL divergence}
We first review some facts about KL divergence, which is defined by
\begin{equation*}
\mathrm{D}_{\textrm{KL}}(p\| q)=\int_\Omega p(x)\log\frac{p(x)}{q(x)}dx.
\end{equation*}
It is a Bregman divergence in $L^2$ space of negative $\mathcal{H}$, where $\mathcal{H}$ represents the Boltzmann--Shannon entropy defined by
\begin{equation*}
\mathcal{H}(p)=-\int_\Omega p(x)\log p(x)dx.
\end{equation*}
 There are several properties of KL divergence, which are useful in Bayesian sampling and AI inference problems. Firstly, define the cross entropy by 
\begin{equation*}
\mathcal{H}_q(p)=-\int_\Omega p(x)\log q(x)dx.
\end{equation*}
Hence KL divergence can be formulated by
\begin{equation*}
\begin{split}
\mathrm{D}_{\textrm{KL}}(p\|q)=&-\int_\Omega p(x)\log q(x)dx+\int_\Omega p(x)\log p(x)dx\\
=&\mathcal{H}_q(p)-\mathcal{H}(p).
\end{split}
\end{equation*}
Secondly, KL divergence has many desired properties, such as nonnegativity, separability etc. Lastly, KL divergence can be used to generate other divergences functionals. Notice that KL divergence is non-symmetric w.r.t. $p$, $q$. In practice, one can define a symmetrized divergence. One typical example is the Jenson-Shannon divergence defined by
\begin{equation*}
\mathrm{D}_{\mathrm{JS}}(p\|q)=\frac{1}{2}\mathrm{D}_{\mathrm{KL}}(p\|r)+\frac{1}{2}\mathrm{D}_{\mathrm{KL}}(q\|r), 
\end{equation*}
where $r=\frac{p+q}{2}$ is the geodesic midpoint (Barycenter) in $L^2$ space.
\subsection{Transport KL divergence}
We are now ready to derive an analog of KL divergence in $L^2$--Wasserstein space. We consider a transport Bregman divergence of (negative) Boltzmann-Shannon entropy as in Example \ref{ent} (i).
\begin{definition}[Transport KL divergence]\label{DTKL}
Define $\mathrm{D}_{\rtKL}\colon \mathcal{P}(\Omega)\times \mathcal{P}(\Omega)\rightarrow \mathbb{R}$ by 
\begin{equation*}
\begin{split}
\mathrm{D}_{\rtKL}(p\|q)=&\int_\Omega \Big(\Delta_x\Phi_p(x)-\log\mathrm{det}(\nabla^2_x\Phi_p(x))-d\Big)q(x)dx,
\end{split}
\end{equation*}
where $\nabla_x\Phi_p$ is the differemorphism map from $q$ to $p$, such that 
\begin{equation*}
(\nabla_x\Phi_p)_\# q= p.
\end{equation*}
Denote $\pi=(\mathrm{id}\times \nabla\Phi_p)_\#q$, then 
\begin{equation*}
\mathrm{D}_{\rtKL}(p\|q)=\int_\Omega \Big(p(x)\log p(x)-q(x)\log q(x)\Big)dx+\int_\Omega \Big(y\cdot\nabla_x\frac{\pi(x,y)}{q(x)}-d\Big)q(x)dx.
\end{equation*}
If $d=1$, then
 \begin{equation*}
\mathrm{D}_{\rtKL}(p\|q)=\int_0^1\Big(\frac{\nabla_xF_p^{-1}(x)}{\nabla_xF_q^{-1}(x)}-\log \frac{\nabla_xF_p^{-1}(x)}{\nabla_xF_q^{-1}(x)}-1\Big)dx.
 \end{equation*}
We call $\mathrm{D}_{\rtKL}$ the transport KL divergence. 
\end{definition}
We study several properties of transport KL divergence. We first derive an analog of cross entropy in $L^2$--Wasserstein space below.
\begin{definition}[Transport cross entropy]
Define $\mathcal{H}_{\rt,q}\colon \mathcal{P}(\Omega)\rightarrow \mathbb{R}$ by
\begin{equation*}
\begin{split}
\mathcal{H}_{\rt,q}(p)=&\int_\Omega \Delta_x\Phi_p(x) q(x)dx-\int_\Omega q(x)\log q(x)dx-d,
\end{split}
\end{equation*}
where $\nabla_x{\Phi_p}_\#q=p$. We call $\mathrm{H}_{\rt}$ the transport cross entropy. If $d=1$, then 
 \begin{equation*}
\mathcal{H}_{\rt,q}(p)=\int_0^1\frac{\nabla_xF_p^{-1}(x)}{\nabla_xF_q^{-1}(x)}dx-\int_\Omega q(x)\log q(x)dx-1.
 \end{equation*}
\end{definition}
We next demonstrate an equality for transport KL divergence. 
\begin{proposition}
\begin{equation*}
\mathrm{D}_{\rtKL}(p\|q)=\mathcal{H}_{\rt,q}(p)-\mathcal{H}(p).
\end{equation*}
\end{proposition}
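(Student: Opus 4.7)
The plan is to verify the identity by expanding both sides in the definitions and reducing everything to a single identity about $\int_\Omega \log\det(\nabla_x^2 \Phi_p(x))\, q(x)\, dx$, which will be handled via the Monge--Ampère (Jacobian) equation together with a change of variables.

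First I would write out the right-hand side explicitly:
\begin{equation*}
\mathcal{H}_{\rt,q}(p)-\mathcal{H}(p)=\int_\Omega \Delta_x\Phi_p(x)\,q(x)\,dx-\int_\Omega q(x)\log q(x)\,dx-d+\int_\Omega p(x)\log p(x)\,dx,
\end{equation*}
and compare it term by term with the definition of $\mathrm{D}_{\rtKL}(p\|q)$. The $\int \Delta_x \Phi_p\,q\,dx$ terms and the constant $-d$ match trivially. Therefore the identity reduces to proving
\begin{equation*}
\int_\Omega \log\det(\nabla^2_x\Phi_p(x))\,q(x)\,dx=\int_\Omega q(x)\log q(x)\,dx-\int_\Omega p(x)\log p(x)\,dx.
\end{equation*}

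To close this last identity, I would invoke the pushforward relation $(\nabla_x\Phi_p)_\# q=p$ in its Monge--Ampère form,
\begin{equation*}
p(\nabla_x\Phi_p(x))\,\det(\nabla_x^2\Phi_p(x))=q(x),
\end{equation*}
which is valid since $\Phi_p$ is strictly convex and the densities are smooth. Taking the logarithm pointwise gives $\log\det(\nabla_x^2\Phi_p(x))=\log q(x)-\log p(\nabla_x\Phi_p(x))$. Multiplying by $q(x)$ and integrating over $\Omega$ yields
\begin{equation*}
\int_\Omega \log\det(\nabla_x^2\Phi_p(x))\,q(x)\,dx=\int_\Omega q(x)\log q(x)\,dx-\int_\Omega \log p(\nabla_x\Phi_p(x))\,q(x)\,dx.
\end{equation*}
Then the change of variables $y=\nabla_x\Phi_p(x)$, using $(\nabla_x\Phi_p)_\# q=p$, converts the last integral into $\int_\Omega p(y)\log p(y)\,dy$, yielding the required reduction and completing the proof.

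The calculation is essentially routine; the only place where any care is needed is justifying the change of variables and the pointwise logarithm, so the mild ``obstacle'' is ensuring the Monge--Ampère identity holds in a strong enough sense and that $\nabla_x\Phi_p$ is a diffeomorphism. Under the smoothness and strict convexity hypotheses already assumed in this section (see Proposition \ref{prop4} and Definition \ref{def4}), these conditions are automatic, so no additional regularity argument is needed.
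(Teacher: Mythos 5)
Your proposal is correct and follows essentially the same route as the paper: both reduce the claim to the identity $\int_\Omega \log\det(\nabla_x^2\Phi_p)\,q\,dx=\int_\Omega q\log q\,dx-\int_\Omega p\log p\,dx$, and both establish it from the Monge--Amp\`ere relation $p(\nabla_x\Phi_p(x))\det(\nabla_x^2\Phi_p(x))=q(x)$ followed by the change of variables $y=\nabla_x\Phi_p(x)$. The only cosmetic difference is that the paper expands $\int q\log q$ by substituting the Monge--Amp\`ere equation inside the logarithm, whereas you take logarithms of that equation first and then integrate against $q$; these are the same computation.
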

\begin{proof}
The proof follows from Definition \ref{DTKL}. As in Example \ref{ent} (i), from ${\nabla_x\Phi_p}_\#q=p$, i.e.,
\begin{equation}\label{MA}
p(\nabla_x\Phi_p(x))\mathrm{det}(\nabla_x^2\Phi_p(x))=q(x),
\end{equation}
we have
\begin{equation*}
\begin{split}
\mathcal{H}(q)=&\int_\Omega q(x)\log q(x)dx\\
=&\int_\Omega p(\nabla_x\Phi_p(x))\mathrm{det}(\nabla_x^2\Phi_p(x))\log \Big(p(\nabla_x\Phi_p(x))\mathrm{det}(\nabla_x^2\Phi_p(x))\Big)dx\\
=&\int_\Omega p(\nabla_x\Phi_p(x))\mathrm{det}(\nabla_x^2\Phi_p(x))\Big(\log p(\nabla_x\Phi_p(x))+\log \mathrm{det}(\nabla_x^2\Phi_p(x))\Big)dx\\
=&\int_\Omega q(x)\log\mathrm{det}(\nabla_x^2\Phi_p(x))dx+\int_\Omega p(y)\log p(y)dy\\
=&\int_\Omega q(x)\log\mathrm{det}(\nabla_x^2\Phi_p(x))dx-\mathcal{H}(p),
\end{split}
\end{equation*}
where we use the fact that $y=\nabla_x\Phi_p(x)$ in the last equality. Hence 
\begin{equation*}
\begin{split}
\mathrm{D}_{\rtKL}(p\|q)
=&\int_\Omega \Big(\Delta_x\Phi_p(x)-\log\mathrm{det}(\nabla^2_x\Phi_p(x))-d\Big)q(x)dx\\
=&\int_\Omega p(x)\log p(x)dx-\int_\Omega q(x)\log q(x)dx+\int_\Omega \Delta_x\Phi_p(x)q(x)dx-d\\
=&-\mathcal{H}(p)+\mathcal{H}_{\rt,q}(p),
\end{split}
\end{equation*}
which finishes the proof.
\end{proof}
We last derive several properties of transport KL divergence.
\begin{theorem}
The transport KL divergence has the following properties.
\begin{itemize}
\item[(i)] Nonnegativity: For any $p$, $q\in\mathcal{P}(\Omega)$, then 
\begin{equation*}
\mathrm{D}_{\rtKL}(p\|q)\geq 0.
\end{equation*}
And 
\begin{equation*}
\mathrm{D}_{\rtKL}(p\|q)=0\quad\textrm{iff}\quad p(x+c)=q(x), \quad\textrm{for any $x\in\Omega$,}
\end{equation*}
where $c\in\mathbb{R}^d$ is a constant vector. 
\item[(ii)] Separability: The transport KL divergence is additive for independent distributions. Suppose $(p_1, p_2), (q_1, q_2)\in\mathcal{P}(\Omega)\times \mathcal{P}(\Omega)$ are independent distributions with joint distributions 
\begin{equation*}
p(x,y)=p_1(x)p_2(y),\qquad q(x,y)=q_1(x)q_2(y).
\end{equation*}
Then 
\begin{equation*}
\mathrm{D}_{\rtKL}(p\|q)=\mathrm{D}_{\rtKL}(p_1\|q_1)+\mathrm{D}_{\rtKL}(p_2\|q_2).
\end{equation*}
\item[(iii)] Transport Hessian information metric: Denote $\sigma=-\nabla\cdot(q\nabla\Phi)\in T_q\mathcal{P}(\Omega)$ and $\epsilon\in\mathbb{R}$. Consider a Taylor expansion by
\begin{equation*}
\mathrm{D}_{\rtKL}((\mathrm{id}+\epsilon \nabla\Phi )_\# q \|q)=\frac{\epsilon^2}{2}\mathrm{Hess}_{\mathrm{T}}\mathcal{H}(q)(\sigma,\sigma)+o(\epsilon^2).
\end{equation*}
Here $\mathrm{Hess}_{\mathrm{T}}\mathcal{H}$ is the Hessian operator of (negative) Boltzmann--Shanon entropy $\mathcal{H}(q)$ on $L^2$--Wasserstein space \cite{Villani2009_optimal}, known as the transport Hessian information metric \cite{LiHess}. In details, 
\begin{equation*}
\mathrm{Hess}_{\mathrm{T}}\mathcal{H}(q)(\sigma, \sigma)=\int_\Omega \mathrm{tr}\Big(\nabla_x^2\Phi(x), \nabla_x^2\Phi(x)\Big)q(x)dx,
\end{equation*}
where $\Phi, \sigma \in C^{\infty}(\Omega)$ satisfy the elliptical equation 
\begin{equation*}
\sigma(x)=-\nabla_x\cdot\Big(q(x)\nabla_x\Phi(x)\Big).
\end{equation*}
\item[(iv)] Transport convexity: Denote $p_1=(\nabla_x\Phi_{p_1})\# q$ and $p_2=(\nabla_x\Phi_{p_2})_\# q$ with
\begin{equation*}
p_\lambda=\Big(\lambda \nabla_x\Phi_{p_1}+(1-\lambda)\nabla_x\Phi_{p_2}\Big)_\# q. 
\end{equation*}
Then for any $\lambda\in [0,1]$, we have
\begin{equation*}
\mathrm{D}_{\rtKL}(p_\lambda\|q)\leq \lambda \mathrm{D}_{\rtKL}(p_1\|q)+(1-\lambda)\mathrm{D}_{\rtKL}(p_2\|q).
\end{equation*}
\end{itemize}
\end{theorem}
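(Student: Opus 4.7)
My plan is to prove the four properties largely in parallel, leaning on the general machinery for transport Bregman divergences from Proposition \ref{TBP} together with some direct matrix-analytic computations on the integrand $\Delta_x\Phi_p(x)-\log\mathrm{det}(\nabla_x^2\Phi_p(x))-d$.

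For (i), the cleanest route is a pointwise analysis. Since $\Phi_p$ is strictly convex and smooth, $\nabla_x^2\Phi_p(x)$ is symmetric positive definite at every $x$, with eigenvalues $\lambda_1(x),\dots,\lambda_d(x)>0$. The integrand factors as $\sum_{i=1}^d\bigl(\lambda_i(x)-\log\lambda_i(x)-1\bigr)$, and each summand is nonnegative by the scalar inequality $z-\log z-1\ge 0$ on $\mathbb{R}_+$, with equality iff $z=1$. This yields $\mathrm{D}_{\rtKL}(p\|q)\ge 0$ pointwise, hence in integral. Equality forces $\nabla_x^2\Phi_p\equiv \mathbb{I}$ $q$-a.e., which integrates to $\nabla_x\Phi_p(x)=x+c$ for a constant $c\in\mathbb{R}^d$; pushing $q$ forward by this translation gives $p(x+c)=q(x)$. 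Alternatively, this case can also be obtained from Proposition \ref{TBP} (i) using the displacement convexity of $\int p\log p$, which is McCann's classical result.

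For (ii), I would use Brenier's theorem: for product marginals $p=p_1\otimes p_2$ and $q=q_1\otimes q_2$, the optimal Kantorovich potential splits as $\Phi_p(x,y)=\Phi_{p_1}(x)+\Phi_{p_2}(y)$, so $\nabla_{(x,y)}^2\Phi_p$ is block-diagonal with blocks $\nabla_x^2\Phi_{p_1}$ and $\nabla_y^2\Phi_{p_2}$. Then $\Delta_{(x,y)}\Phi_p=\Delta_x\Phi_{p_1}+\Delta_y\Phi_{p_2}$ and $\log\mathrm{det}\,\nabla_{(x,y)}^2\Phi_p=\log\mathrm{det}\,\nabla_x^2\Phi_{p_1}+\log\mathrm{det}\,\nabla_y^2\Phi_{p_2}$. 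Integrating against the product measure $q_1q_2$ and using Fubini splits the integral, while the ambient dimension $d=d_1+d_2$ splits the offset, giving the additivity.

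For (iii), I simply apply Proposition \ref{TBP} (ii) with $\mathcal{F}(p)=\int p\log p$; this delivers the Taylor expansion whose second-order term is $\frac{\epsilon^2}{2}\mathrm{Hess}_{\mathrm{T}}\mathcal{H}(q)(\sigma,\sigma)$. The explicit expression $\int\mathrm{tr}\bigl(\nabla_x^2\Phi\cdot\nabla_x^2\Phi\bigr)q(x)dx$ then follows from the general Hessian formula \eqref{Hessian} specialized to the Boltzmann--Shannon entropy, which is the Otto--Villani computation.

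For (iv), the key observation is that $\lambda\nabla_x\Phi_{p_1}+(1-\lambda)\nabla_x\Phi_{p_2}=\nabla_x\bigl(\lambda\Phi_{p_1}+(1-\lambda)\Phi_{p_2}\bigr)$, and since the sum of convex functions is convex, Brenier's uniqueness identifies $\Phi_{p_\lambda}=\lambda\Phi_{p_1}+(1-\lambda)\Phi_{p_2}$ and hence $\nabla_x^2\Phi_{p_\lambda}=\lambda\nabla_x^2\Phi_{p_1}+(1-\lambda)\nabla_x^2\Phi_{p_2}$. Plug into the integrand: $\Delta_x\Phi_{p_\lambda}$ is linear in $\lambda$ because the trace is linear; the term $-\log\mathrm{det}(\cdot)$ is a convex function on symmetric positive definite matrices (a standard fact in matrix analysis), so $-\log\mathrm{det}\,\nabla_x^2\Phi_{p_\lambda}\le -\lambda\log\mathrm{det}\,\nabla_x^2\Phi_{p_1}-(1-\lambda)\log\mathrm{det}\,\nabla_x^2\Phi_{p_2}$. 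Integrating against $q$ yields the claim. The subtlest step, which I expect to be the main obstacle to fully rigorous writing, is ensuring in (ii) that the product potential is indeed the optimal Brenier potential for the product measure and in (iv) that the interpolated potential genuinely serves as $\Phi_{p_\lambda}$; both reduce to invoking the uniqueness part of Brenier's theorem on the assumed smooth convex domain.
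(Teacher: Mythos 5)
Your proposal is correct and follows essentially the same route as the paper's proof: the eigenvalue decomposition and scalar inequality $z-\log z-1\ge 0$ for (i), the block-diagonal splitting of the product potential for (ii), the specialization of Proposition \ref{TBP} (ii) for (iii), and the convexity of $A\mapsto \mathrm{tr}(A)-\log\mathrm{det}(A)-d$ on positive definite matrices for (iv). The only difference is that you explicitly flag the Brenier-uniqueness step identifying $\Phi_{p_1}(x)+\Phi_{p_2}(y)$ and $\lambda\Phi_{p_1}+(1-\lambda)\Phi_{p_2}$ as the optimal potentials, which the paper uses implicitly.
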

\begin{proof}
Here (iii) follows from Proposition \ref{TBP}. We only need to prove (i), (ii) and (iv). 

\noindent(i) Denote the eigendecomposition of a symmetric matrix function by 
\begin{equation*}
\nabla^2\Phi_p(x)=V^{\ts}\Lambda V, 
\end{equation*}
where $\Lambda=\textrm{diag}(\lambda_i(x))_{1\leq i\leq d}$ is a positive eigenvalue matrix and $V$ is an orthogonal eigenvector matrix. Then
\begin{equation*}
\begin{split}
\mathrm{D}_{\rtKL}(p\|q)=&\sum_{i=1}^d\int_\Omega (\lambda_i(x)-\log\lambda_i(x)-1)q(x)dx\geq 0. 
\end{split}
\end{equation*}
In the above proof, we use the fact that $\lambda_i-\log \lambda_i-1\geq 0$, where the equality holds when $\lambda_i=1$. Hence $\mathrm{D}_{\rtKL}(p\|q)=0$ implies the fact 
that $\lambda_i(x)=1$ for $i=1,\cdots, d$, on the support of $q$. Thus $\nabla_x^2\Phi_p(x)=\mathbb{I}$. In other words, $\nabla_x\Phi_p(x)=x+c$, where $c$ is a constant vector in $\mathbb{R}^d$. From $\nabla_x{\Phi_p}_\#q=p$, 
i.e. $p(\nabla_x\Phi_p)\mathrm{det}(\nabla^2\Phi_p)=q$, we prove the result. 
 
\noindent(ii) Consider pushforward operators $\nabla_x\Phi_{p_1}$, $\nabla_x\Phi_{p_2}$ from $q_1$, $q_2$ to $p_1$, $p_2$, respectively. In other words, 
\begin{equation*}
\nabla_x\Phi_{p_1}(x)_\# q_1(x)=p_1(x),\qquad \nabla_y\Phi_{p_2}(y)_\# q_2(y)=p_2(y).
\end{equation*}
Notice 
\begin{equation*}
(\nabla_x\Phi_{p_1}(x), \nabla_y\Phi_{p_2}(y))_\# ( q_1(x)q_2(y))=p_1(x)p_2(y).
\end{equation*}
Denote $\Phi_p(x,y)=\Phi_{p_1}(x)+\Phi_{p_2}(y)$. Then $\nabla\Phi(x,y)=(\nabla_x\Phi_{p_1}(x), \nabla_y\Phi_{p_2}(y))$. Thus 
\begin{equation*}
\nabla \Phi(x,y)_\# q(x,y)=p(x,y).
\end{equation*}
We are now ready to check the separability property. Notice $\textrm{dim}(\Omega\times \Omega)=2d$, then
\begin{equation*}
\Delta \Phi_p(x,y)=\Delta_x\Phi_{p_1}(x)+\Delta_y\Phi_{p_2}(y),
\end{equation*}
and 
\begin{equation*}
\begin{split}
\mathrm{det}(\nabla^2\Phi_p(x,y))=&\mathrm{det}\begin{pmatrix} \nabla^2_{xx}\Phi_{p_1}(x)&0\\
0 & \nabla^2_{yy}\Phi_{p_2}(y)
\end{pmatrix}\\
=&\mathrm{det}(\nabla^2_{xx}\Phi_{p_1}(x))\cdot \mathrm{det}(\nabla^2_{yy}\Phi_{p_2}(y)).
\end{split}
\end{equation*}
Hence
\begin{equation*}
\begin{split}
\mathrm{D}_{\rtKL}(p\|q)=&\int_\Omega \int_\Omega\Big(\Delta\Phi_p(x,y)-\log\mathrm{det}(\nabla^2\Phi_p(x,y))-2d\Big)q(x,y)dxdy\\
=&\int_\Omega\int_\Omega \Big(\Delta_x\Phi_{p_1}(x)+\Delta_y\Phi_{p_2}(y)\\
&\hspace{1cm}-\log \mathrm{det}(\nabla_x^2\Phi_{p_1}(x))-\log\mathrm{det}(\nabla_y^2\Phi_{p_2}(y))-2d\Big) q_1(x)q_2(y)dxdy\\
=&\int_\Omega\int_\Omega \Big(\Delta_x\Phi_{p_1}(x)-\log \mathrm{det}(\nabla_x^2\Phi_{p_1}(x))-d\\
&\hspace{1cm}+\Delta_y\Phi_{p_2}(y)-\log\mathrm{det}(\nabla_y^2\Phi_{p_2}(y))-d)\Big) q_1(x)q_2(y)dxdy\\
=&\mathrm{D}_{\rtKL}(p_1\|q_1)+\mathrm{D}_{\rtKL}(p_2\|q_2).
\end{split}
\end{equation*}

\noindent(iv) Denote a matrix function $f\colon \mathbb{R}^{d\times d}\rightarrow \mathbb{R}$, such that 
\begin{equation*}
f(A)=\mathrm{tr}(A)-\log\mathrm{det}(A)-d. 
\end{equation*}
Then $f$ is convex w.r.t. the matrix variable $A$. In this notation, we have 
\begin{equation*}
\mathrm{D}_{\rtKL}(p\|q)=\int_\Omega f(\nabla_x^2\Phi_p(x))q(x)dx.
\end{equation*}
Thus 
\begin{equation*}
\begin{split}
\mathrm{D}_{\rtKL}(p_\lambda\|q)=&\int_\Omega f\Big(\lambda \nabla^2_x\Phi_{p_1}(x)+(1-\lambda)\nabla^2_x\Phi_{p_2}(x)\Big)q(x)dx\\
\leq &\int_\Omega\Big(\lambda f(\nabla^2_x\Phi_{p_1}(x))+(1-\lambda) f(\nabla^2_x\Phi_{p_2}(x))\Big)q(x) dx\\
=&\lambda\mathrm{D}_{\rtKL}(p_1\|q)+(1-\lambda)\mathrm{D}_{\rtKL}(p_2\|q).
\end{split}
\end{equation*}
\end{proof}
\begin{remark}
We recall the fact that the convexity for classical KL divergence is on both terms of $p$, $q$ w.r.t. the $L^2$ metric. However, the classical KL divergence may not be convex in term of pushforward mapping function $\nabla_x\Phi_p$. A known fact is that the KL divergence in term of mapping depends on the convexity of negative log target distribution on sample space. This is different from the ones in transport KL divergence. Here the convexity w.r.t pushforward map $\nabla_x\Phi_p$ holds based on the definition of transport KL divergence. 
See detailed comparisons in appendix subsection \ref{sec52}.
\end{remark}
\begin{remark}
We notice that the transport convexity in (iv) can be different from the displacement convexity. We only show the transport convexity of transport KL divergence for a fixed reference density $q$. This fact is similar to the generalized convexity defined in \cite[Definition 9.2.4]{AGS}.
\end{remark}
\begin{remark}
We remark that the Hessian operator of (negative) Boltzmann--Shannon entropy in $L^2$--Wasserstein space connects Fisher-Rao metric, Gamma calculus and Ricci curvature on a sample space; see \cite{LiG, LiHess, Villani2009_optimal}. This paper proposes to construct Bregman divergences by using this transport information Hessian metric. 
\end{remark}
\subsection{Examples in Gaussian distributions}
Suppose that $p_X$, $p_Y$ are two Gaussian distributions with zero means in $\mathbb{R}^d$, such that
\begin{equation}\label{Gaussian}
p_X(x)=\frac{1}{\sqrt{(2\pi)^d\mathrm{det}(\Sigma_X)}}e^{-\frac{1}{2}x^{\ts}\Sigma_X^{-1}x},\quad p_Y(x)=\frac{1}{\sqrt{(2\pi)^d\mathrm{det}(\Sigma_Y)}}e^{-\frac{1}{2}x^{\ts}\Sigma_Y^{-1}x},
\end{equation}
where $\Sigma_X$, $\Sigma_Y\in\mathbb{R}^{d\times d}$ are symmetric positive definite matrices.
\begin{proposition}\label{prop18}
The transport KL divergence in Gaussian family \eqref{Gaussian} forms
\begin{equation}\label{TKL}
\begin{split}
\mathrm{D}_{\mathrm{TKL}}(p_X\|p_Y)=&\frac{1}{2}\log\frac{\mathrm{det}(\Sigma_Y)}{\mathrm{det}(\Sigma_X)}+\mathrm{tr}\Big(\Sigma_X^{\frac{1}{2}}\Big(\Sigma_X^{\frac{1}{2}}\Sigma_Y\Sigma_X^{\frac{1}{2}}\Big)^{-\frac{1}{2}}\Sigma_X^{\frac{1}{2}}\Big)-d.
\end{split}
\end{equation}
When $\Sigma_X$, $\Sigma_Y$ commute, i.e. $\Sigma_X\Sigma_Y=\Sigma_Y\Sigma_X$, \eqref{TKL} can be simplified by
\begin{equation*}
\mathrm{D}_{\mathrm{TKL}}(p_X\|p_Y)=\frac{1}{2}\log\frac{\mathrm{det}(\Sigma_Y)}{\mathrm{det}(\Sigma_X)}+\mathrm{tr}\Big(\Sigma_X^{\frac{1}{2}}\Sigma_Y^{-\frac{1}{2}}\Big)-d.
\end{equation*}
\end{proposition}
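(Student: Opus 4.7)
The plan is to exploit the fact that the Brenier optimal transport map between two centered Gaussians is linear, which collapses the integral in Definition \ref{DTKL} to a pointwise matrix computation.

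First, I would recall (or re-derive from Brenier's theorem) that the unique optimal transport map pushing $p_Y$ forward to $p_X$ has the form $T(x) = \nabla_x\Phi_{p_X}(x) = A x$ for some symmetric positive definite matrix $A$, and the corresponding convex potential is $\Phi_{p_X}(x) = \tfrac{1}{2}x^{\ts}Ax$. The matrix $A$ is characterized by the pushforward constraint $A\Sigma_Y A = \Sigma_X$. One can directly verify that
\begin{equation*}
A = \Sigma_X^{1/2}\bigl(\Sigma_X^{1/2}\Sigma_Y\Sigma_X^{1/2}\bigr)^{-1/2}\Sigma_X^{1/2}
\end{equation*}
is symmetric positive definite and satisfies $A\Sigma_Y A = \Sigma_X$, so by uniqueness of the Brenier map this is the optimal transport map.

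Next, since $\nabla^2_x\Phi_{p_X}(x) = A$ is independent of $x$, Definition \ref{DTKL} simplifies immediately to
\begin{equation*}
\mathrm{D}_{\rtKL}(p_X\|p_Y) = \mathrm{tr}(A) - \log\mathrm{det}(A) - d.
\end{equation*}
Taking determinants in $A\Sigma_Y A = \Sigma_X$ yields $\mathrm{det}(A)^2 = \mathrm{det}(\Sigma_X)/\mathrm{det}(\Sigma_Y)$, hence $-\log\mathrm{det}(A) = \tfrac{1}{2}\log\bigl(\mathrm{det}(\Sigma_Y)/\mathrm{det}(\Sigma_X)\bigr)$. Substituting the closed form of $A$ for the trace term produces \eqref{TKL}.

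For the commuting case, I would simultaneously diagonalize $\Sigma_X$ and $\Sigma_Y$, so that $\Sigma_X^{1/2}\Sigma_Y\Sigma_X^{1/2} = \Sigma_X\Sigma_Y$ and $(\Sigma_X\Sigma_Y)^{-1/2} = \Sigma_X^{-1/2}\Sigma_Y^{-1/2}$. A short cyclic-trace calculation then reduces the trace term in \eqref{TKL} to $\mathrm{tr}(\Sigma_X^{1/2}\Sigma_Y^{-1/2})$, giving the stated simplification. I expect no genuine obstacle here: the only delicate point is justifying the compact factored form of $A$ and the associated commuting-case simplification, both of which are essentially bookkeeping in the matrix square-root calculus.
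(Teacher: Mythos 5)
Your proposal is correct and follows essentially the same route as the paper: use the linear Brenier map $\nabla_x\Phi_{p_X}(x)=Ax$ between centered Gaussians so that the constant Hessian reduces the divergence to $\mathrm{tr}(A)-\log\mathrm{det}(A)-d$, then substitute the closed form of $A$. The only (harmless) difference is that you verify $A\Sigma_YA=\Sigma_X$ and invoke uniqueness of the Brenier map directly, and make the determinant identity $\mathrm{det}(A)^2=\mathrm{det}(\Sigma_X)/\mathrm{det}(\Sigma_Y)$ explicit, whereas the paper simply cites Takatsu for the formula of the map and leaves that bookkeeping implicit.
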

\begin{proof}
The proof follows from the fact \cite{AT} that the transport map $\nabla\Phi_{p_X}$, with the property ${\nabla\Phi_{p_X}}_\#p_Y=p_X$, satisfies  
\begin{equation*}
\nabla_x\Phi_{p_X}(x)=\Sigma_X^{\frac{1}{2}}\Big(\Sigma_X^{\frac{1}{2}}\Sigma_Y\Sigma_X^{\frac{1}{2}}\Big)^{-\frac{1}{2}}\Sigma_X^{\frac{1}{2}}x. 
\end{equation*}
Then $\Delta_x\Phi_{p_X}(x)=\mathrm{tr}\Big(\Sigma_X^{\frac{1}{2}}\Big(\Sigma_X^{\frac{1}{2}}\Sigma_Y\Sigma_X^{\frac{1}{2}}\Big)^{-\frac{1}{2}}\Sigma_X^{\frac{1}{2}}\Big)$.
Combining these facts into Definition \ref{DTKL}, we derive formula \eqref{TKL}.
\end{proof}
\begin{example}
We remark that the transport KL divergence \eqref{TKL} is different from the classical KL divergence, where
\begin{equation*}
\begin{split}
\mathrm{D}_{\mathrm{KL}}(p_X\|p_Y)=&\int_\Omega p_X(x)\log\frac{p_X(x)}{p_Y(x)}dx\\
=&\frac{1}{2}\log\frac{\mathrm{det}(\Sigma_Y)}{\mathrm{det}(\Sigma_X)}+\frac{1}{2}\mathrm{tr}\Big(\Sigma_X \Sigma_Y^{-1}\Big)-\frac{d}{2}.
\end{split}
\end{equation*}
We compare both KL divergence and transport KL divergence in Figure \ref{figure}. 
 \begin{figure}[H]
    \includegraphics[scale=0.4]{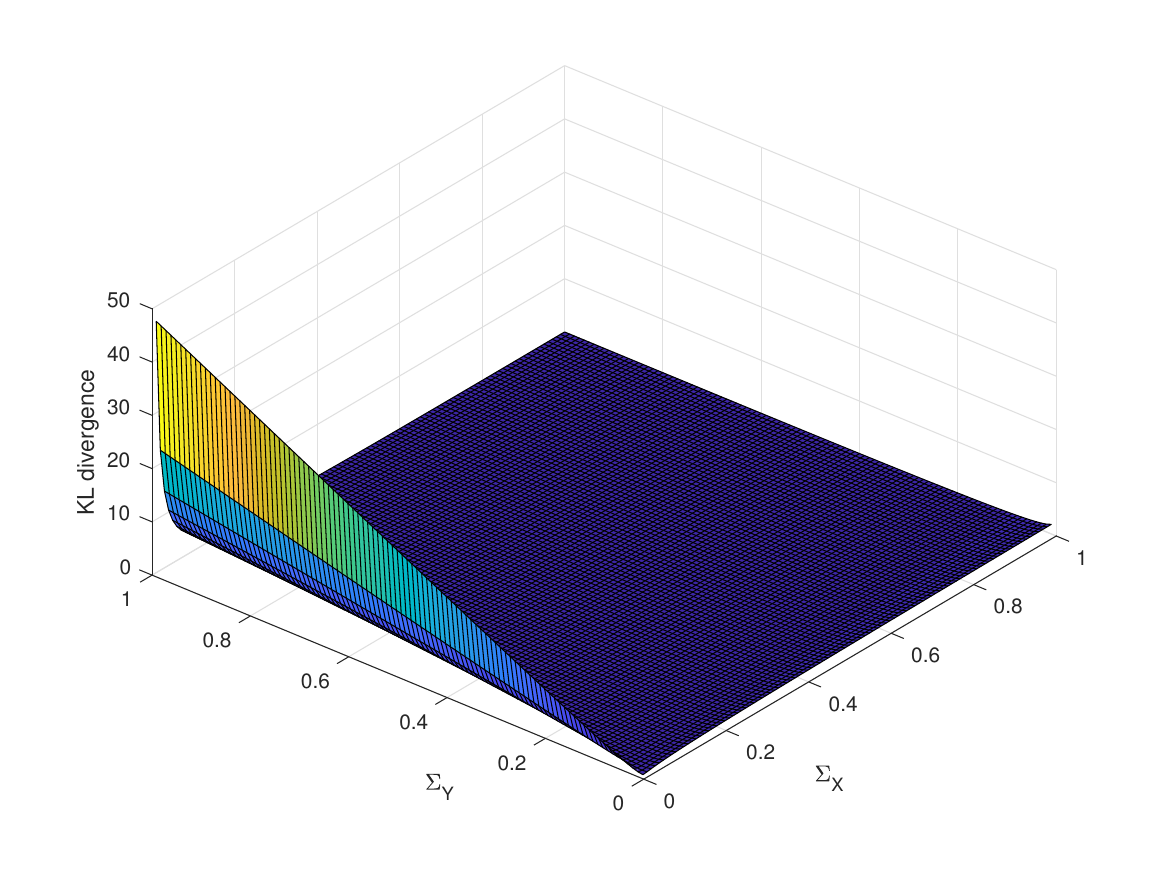}\includegraphics[scale=0.4]{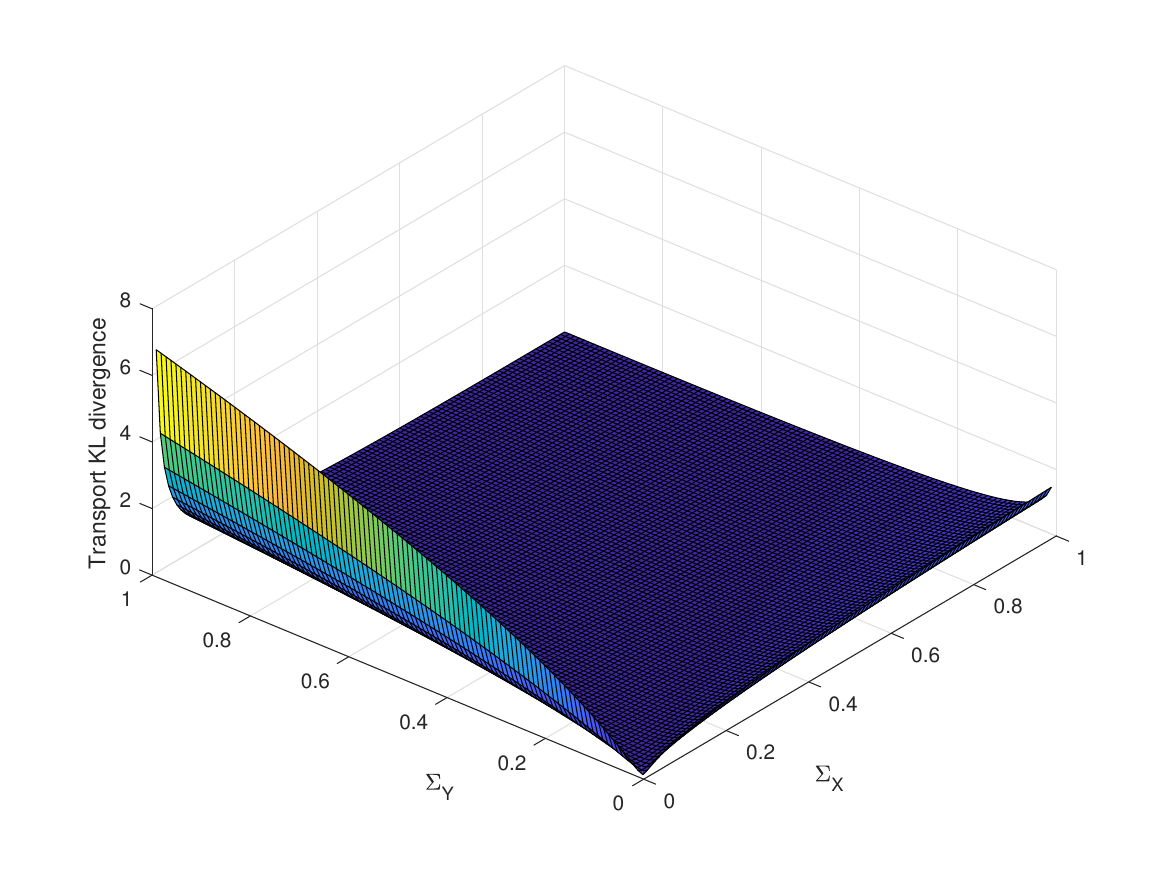}
    \caption{A comparison between KL divergence and transport KL divergence for one dimensional Gaussian distributions. Left represents the KL divergence. Right represents the transport KL divergence.
    }
    \label{figure}
\end{figure}
\end{example}
\subsection{Transport Jenson--Shannon divergence}
We also define a symmetrized KL divergence in $L^2$--Wasserstein space. We call it the transport Jenson--Shannon divergence. 
\begin{definition}[Transport Jensen--Shannon divergence]
Define $\mathrm{D}_{\rtJS}\colon \mathcal{P}(\Omega)\times \mathcal{P}(\Omega)\rightarrow \mathbb{R}$ by
\begin{equation*}
\mathrm{D}_{\rtJS}(p\|q)=\frac{1}{2}\mathrm{D}_{\rtKL}(p\|r)+\frac{1}{2}\mathrm{D}_{\rtKL}(q\|r),\end{equation*}
where $r\in\mathcal{P}(\Omega)$ is the geodesic midpoint (Barycenter) between $p$ and $q$ in $L^2$--Wasserstein space, i.e.
\begin{equation*}
r=\Big(\frac{1}{2}\big(\nabla_x\Phi_p+\nabla_x\Phi_q\big)\Big)_\# q.
\end{equation*}
\end{definition}
We present several closed form solutions for transport Jenson--Shanon divergence. 
\begin{proposition}\label{prop11}
The transport Jenson--Shanon divergence in one dimensional sample space satisfies 
\begin{equation*}
\mathrm{D}_{\rtJS}(p\|q)=-\frac{1}{2}\int_0^1\log\frac{\nabla_xF_p^{-1}(x)\cdot \nabla_xF_q^{-1}(x)}{\frac{1}{4}(\nabla_xF_p^{-1}(x)+\nabla_xF_q^{-1}(x))^2} dx.
\end{equation*}
\end{proposition}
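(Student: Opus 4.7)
The plan is to reduce the proposition to the known one dimensional formula for the transport KL divergence in equation \eqref{ED} and to the explicit description of the $L^2$--Wasserstein barycenter via quantile functions. Throughout, abbreviate $a(u):=\nabla_u F_p^{-1}(u)$ and $b(u):=\nabla_u F_q^{-1}(u)$.

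First I would identify the barycenter $r$ in one-dimensional CDF coordinates. Since the paper fixes transport coordinates with $\nabla_x\Phi_q(x)=x$, in one dimension the optimal map from $q$ to $p$ is $\nabla_x\Phi_p(x)=F_p^{-1}(F_q(x))$, which is a non-decreasing function (the gradient of a convex potential). Thus
\begin{equation*}
r=\Bigl(\tfrac{1}{2}(\nabla_x\Phi_p+\nabla_x\Phi_q)\Bigr)_\#q,\qquad T(x):=\tfrac{1}{2}\bigl(F_p^{-1}(F_q(x))+x\bigr).
\end{equation*}
Because $T$ is non-decreasing and pushes $q$ to $r$, the inverse CDFs satisfy $F_r^{-1}(u)=T(F_q^{-1}(u))$ for $u\in[0,1]$, which simplifies to $F_r^{-1}(u)=\tfrac{1}{2}\bigl(F_p^{-1}(u)+F_q^{-1}(u)\bigr)$ and hence
\begin{equation*}
\nabla_u F_r^{-1}(u)=\tfrac{1}{2}\bigl(a(u)+b(u)\bigr).
\end{equation*}

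Next I would plug this barycenter into the one dimensional closed form \eqref{ED} for the transport KL divergence. This yields
\begin{equation*}
\mathrm{D}_{\rtKL}(p\|r)=\int_0^1\Bigl(\tfrac{2a}{a+b}-\log\tfrac{2a}{a+b}-1\Bigr)du,\qquad \mathrm{D}_{\rtKL}(q\|r)=\int_0^1\Bigl(\tfrac{2b}{a+b}-\log\tfrac{2b}{a+b}-1\Bigr)du.
\end{equation*}
Averaging and noticing that the ``linear'' parts telescope, namely $\tfrac{1}{2}\bigl(\tfrac{2a}{a+b}+\tfrac{2b}{a+b}\bigr)-1=0$, the integrand collapses to the pure logarithmic contribution
\begin{equation*}
-\tfrac{1}{2}\log\tfrac{2a}{a+b}-\tfrac{1}{2}\log\tfrac{2b}{a+b}=-\tfrac{1}{2}\log\tfrac{ab}{\tfrac{1}{4}(a+b)^2},
\end{equation*}
which, once integrated, gives exactly the claimed formula.

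The routine bookkeeping is the algebraic cancellation of the linear terms and the merging of the two logs, which is straightforward. The only genuinely substantive step is the identification of the barycenter via quantile functions: one must justify that $T=\tfrac{1}{2}(\nabla\Phi_p+\nabla\Phi_q)$ is still monotone increasing (hence an optimal transport map in 1D) and that pushforward by a monotone map of $q$ yields the quantile formula $F_r^{-1}=T\circ F_q^{-1}$. This is standard in the one-dimensional optimal transport literature and can be invoked directly; once this averaging identity for inverse CDFs is in place, the rest of the proof is a direct substitution.
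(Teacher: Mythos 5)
Your proposal is correct and follows essentially the same route as the paper: identify the $L^2$--Wasserstein midpoint via the averaged quantile functions, $\nabla_u F_r^{-1}=\tfrac{1}{2}(\nabla_u F_p^{-1}+\nabla_u F_q^{-1})$ (the paper cites \cite[Theorem 6.0.2]{AGS} for the displacement interpolation in quantile coordinates, where you derive it from pushforward by the monotone averaged map), and then substitute into the one-dimensional transport KL formula \eqref{ED}. Your explicit cancellation of the linear terms and merging of the logarithms is exactly the omitted ``from Example \ref{ent}~(i) we prove the result'' step of the paper.
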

\begin{proposition}\label{prop12}
The transport Jenson--Shanon divergence in Gaussian family \eqref{Gaussian} satisfies
\begin{equation}\label{TJS}
\begin{split}
&\mathrm{D}_{\rtJS}(p_X\|p_Y)\\
=&-\frac{1}{2}\log\frac{\mathrm{det}(\Sigma_X^\frac{1}{2})\mathrm{det}(\Sigma_Y^\frac{1}{2})}{\mathrm{det}(\Sigma_Z)}+\frac{1}{2}\mathrm{tr}\Big(\Sigma_X^{\frac{1}{2}}\Big(\Sigma_X^{\frac{1}{2}}\Sigma_Z\Sigma_X^{\frac{1}{2}}\Big)^{-\frac{1}{2}}\Sigma_X^{\frac{1}{2}}+ \Sigma_Y^{\frac{1}{2}}\Big(\Sigma_Y^{\frac{1}{2}}\Sigma_Z\Sigma_Y^{\frac{1}{2}}\Big)^{-\frac{1}{2}}\Sigma_Y^{\frac{1}{2}} \Big)-d,
\end{split}
\end{equation}
where 
\begin{equation*}
\Sigma_Z=\frac{1}{4}(\mathbb{I}+\mathbb{T})\Sigma_Y(\mathbb{I}+\mathbb{T}),\quad\textrm{with}\quad \mathbb{T}=\Sigma_X^{\frac{1}{2}}\Big(\Sigma_X^{\frac{1}{2}}\Sigma_Y\Sigma_X^{\frac{1}{2}}\Big)^{-\frac{1}{2}}\Sigma_X^{\frac{1}{2}}.
\end{equation*}
In particular, if $\Sigma_X$, $\Sigma_Y$ commute, then \eqref{TJS} can be simplified by
\begin{equation*}
\mathrm{D}_{\rtJS}(p_X\|p_Y)=-\frac{1}{2}\log\frac{\mathrm{det}(\Sigma_X^\frac{1}{2})\mathrm{det}(\Sigma_Y^{\frac{1}{2}})}{\mathrm{det}\Big(\frac{1}{4}(\Sigma^{\frac{1}{2}}_X+\Sigma^{\frac{1}{2}}_Y)^2\Big)}.
\end{equation*}
\end{proposition}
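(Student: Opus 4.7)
The plan is to reduce everything to two instances of the transport KL formula \eqref{TKL} from Proposition \ref{prop18}, using the Wasserstein midpoint $r$ in place of the reference Gaussian.

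First I would identify the displacement midpoint. Taking $q=p_Y$ as the reference density, transport coordinates give $\nabla_x\Phi_{p_Y}(x)=x$, and by the Brenier--McCann map between centered Gaussians (cited in the proof of Proposition \ref{prop18}) the optimal map from $p_Y$ to $p_X$ is $\nabla_x\Phi_{p_X}(x)=\mathbb{T}x$ with $\mathbb{T}=\Sigma_X^{\frac{1}{2}}(\Sigma_X^{\frac{1}{2}}\Sigma_Y\Sigma_X^{\frac{1}{2}})^{-\frac{1}{2}}\Sigma_X^{\frac{1}{2}}$. The matrix $\mathbb{T}$ is symmetric (each factor is symmetric), so the definition of $r$ gives $r=\bigl(\tfrac{1}{2}(\mathbb{I}+\mathbb{T})x\bigr)_{\#}p_Y$, which is a centered Gaussian with covariance $\Sigma_Z=\tfrac{1}{4}(\mathbb{I}+\mathbb{T})\Sigma_Y(\mathbb{I}+\mathbb{T})$, matching the statement.

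Next I would apply formula \eqref{TKL} with covariance pair $(\Sigma_X,\Sigma_Z)$ and then with $(\Sigma_Y,\Sigma_Z)$, obtaining explicit expressions for $\mathrm{D}_{\rtKL}(p_X\|r)$ and $\mathrm{D}_{\rtKL}(p_Y\|r)$. Averaging with weights $\tfrac{1}{2}$, the logarithmic parts combine via
\begin{equation*}
\tfrac{1}{4}\log\frac{\det(\Sigma_Z)^2}{\det(\Sigma_X)\det(\Sigma_Y)}=-\tfrac{1}{2}\log\frac{\det(\Sigma_X^{\frac{1}{2}})\det(\Sigma_Y^{\frac{1}{2}})}{\det(\Sigma_Z)},
\end{equation*}
the two trace contributions assemble into the symmetric expression stated in \eqref{TJS}, and the two $-d$ constants average to $-d$. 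This yields the general formula.

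For the commuting case $\Sigma_X\Sigma_Y=\Sigma_Y\Sigma_X$, simultaneous diagonalization reduces $\mathbb{T}$ to $\Sigma_Y^{-\frac{1}{2}}\Sigma_X^{\frac{1}{2}}$, and expansion of $(\mathbb{I}+\mathbb{T})\Sigma_Y(\mathbb{I}+\mathbb{T})$ collapses to $(\Sigma_X^{\frac{1}{2}}+\Sigma_Y^{\frac{1}{2}})^2$, giving $\Sigma_Z=\tfrac{1}{4}(\Sigma_X^{\frac{1}{2}}+\Sigma_Y^{\frac{1}{2}})^2$. In the common eigenbasis each trace term $\Sigma_X^{\frac{1}{2}}(\Sigma_X^{\frac{1}{2}}\Sigma_Z\Sigma_X^{\frac{1}{2}})^{-\frac{1}{2}}\Sigma_X^{\frac{1}{2}}$ reduces to $2\Sigma_X^{\frac{1}{2}}(\Sigma_X^{\frac{1}{2}}+\Sigma_Y^{\frac{1}{2}})^{-1}$ (and analogously with $X\leftrightarrow Y$), so their sum equals $2\mathbb{I}$ and the full trace term equals $d$, exactly cancelling the $-d$ and leaving only the logarithmic piece. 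The main obstacle is psychological rather than technical: in the non-commuting case one must resist simplifying the expression $\Sigma_Z=\tfrac{1}{4}(\mathbb{I}+\mathbb{T})\Sigma_Y(\mathbb{I}+\mathbb{T})$ and simply substitute it into \eqref{TKL}; the trace term will not telescope as it does under commutativity, which is precisely why the general formula retains the unsimplified symmetric form.
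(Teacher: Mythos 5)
Your proposal is correct and follows essentially the same route as the paper: identify the $L^2$--Wasserstein geodesic midpoint $r=\mathcal{N}(0,\Sigma_Z)$ with $\Sigma_Z=\tfrac{1}{4}(\mathbb{I}+\mathbb{T})\Sigma_Y(\mathbb{I}+\mathbb{T})$ via the pushforward of $p_Y$ under $\tfrac{1}{2}(\mathbb{I}+\mathbb{T})$, apply the closed form of Proposition \ref{prop18} to the pairs $(\Sigma_X,\Sigma_Z)$ and $(\Sigma_Y,\Sigma_Z)$, and average; the commuting-case reduction $\Sigma_Z^{1/2}=\tfrac{1}{2}(\Sigma_X^{1/2}+\Sigma_Y^{1/2})$ and the cancellation of the trace term against $-d$ are exactly what the paper (more tersely) invokes. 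Your write-up merely makes explicit the determinant and trace bookkeeping that the paper leaves to the reader.
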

\begin{proof}[Proof of Proposition \ref{prop11} and \ref{prop12}]
If $d=1$, from \cite[Theorem 6.0.2]{AGS}, the geodesic $p_t(x)\in \mathcal{P}(\Omega)$, $t\in [0,1]$, in $L^2$--Wasserstein space connecting $q$ to $p$ satisfies 
\begin{equation*}
\nabla_x F^{-1}_{p_t}(x)=t\nabla_x F^{-1}_p(x)+(1-t) \nabla_xF^{-1}_q(x).
\end{equation*}
The geodesic midpoint, i.e. $r=p_{\frac{1}{2}}$, satisfies
\begin{equation*}
\nabla_x F^{-1}_r(x)=\nabla_x F^{-1}_{p_{\frac{1}{2}}}(x)=\frac{1}{2}\Big({\nabla_x F_p^{-1}(x)}+{\nabla_x F_q^{-1}(x)}\Big).
\end{equation*}
From example \ref{ent} (i), we prove the result. Similarly, in Gaussian family, the results follow the proof in Proposition \ref{prop18}. Here the geodesic midpoint $p_{\frac{1}{2}}=\mathcal{N}(0, \Sigma_Z)$ in $L^2$--Wasserstein space satisfies  
\begin{equation*}
\Sigma_Z=(\frac{1}{2}(\mathbb{I}+\mathbb{T}))^{\ts}\Sigma_Y(\frac{1}{2}(\mathbb{I}+\mathbb{T}))=\frac{1}{4}(\mathbb{I}+\mathbb{T})\Sigma_Y(\mathbb{I}+\mathbb{T}). 
\end{equation*}
In particular, if $\Sigma_X$, $\Sigma_Y$ commute, then 
\begin{equation*}
\Sigma_Z^{\frac{1}{2}}=\frac{1}{2}\Sigma_X^{\frac{1}{2}}+\frac{1}{2}\Sigma_Y^{\frac{1}{2}}.
\end{equation*}
From Proposition \ref{prop18} and formula \eqref{TJS}, we derive the result. 
\end{proof}
 \begin{figure}
 \centering
    \includegraphics[scale=0.4]{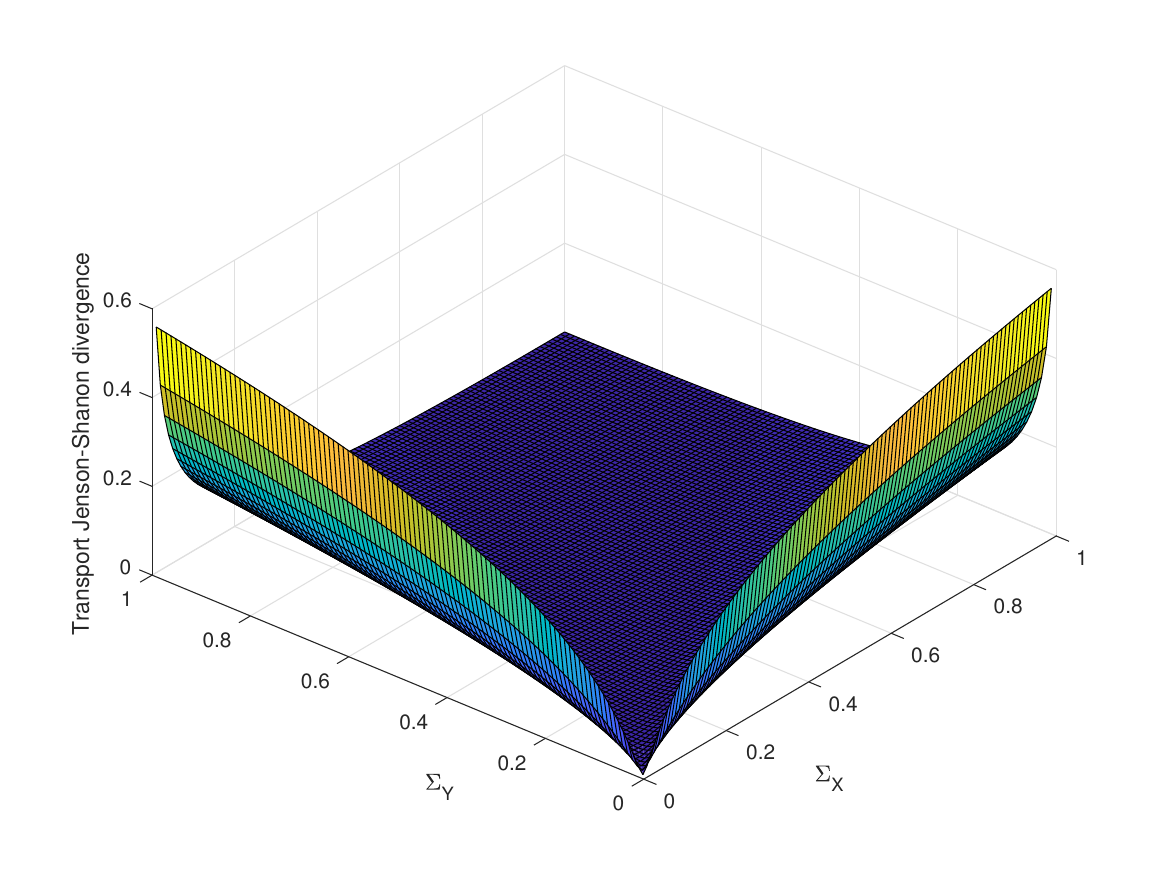}
    \caption{Illustration of transport Jensen--Shannon divergences for one dimensional Gaussian distributions.}
    \label{figure2}
\end{figure}
\begin{remark}
A known fact is that the classical Jensen--Shannon divergence between Gaussian distributions does not have a closed-form solution \cite{Nielsen2}. Interestingly, the transport Jensen--Shannon entropy has a closed form solution. And the geodesic midpoint in $L^2$--Wasserstein space provides the other way to construct symmetric divergence functionals. 
\end{remark}
\section{Discussion}
In this paper, we formulate Bregman divergences in $L^2$--Wasserstein space, namely transport Bregman divergences. They are generalizations of the $L^2$--Wasserstein distance. We also derive the transport KL divergence by a transport Bregman divergence of negative Boltzmann--Shannon entropy. We remark that the transport KL divergence is an Itakura--Saito type divergence in one-dimensional sample space. We also propose a symmetrized generalization of transport KL divergence.  These transport divergences have shown convexity properties in terms of pushforward mapping functions. We expect that transport Bregman divergences will be useful in AI inference and optimization problems.

\newpage
\section*{Appendix}
In this appendix, we first present all derivation proofs in this paper. We next summarize several formulations of transport Bregman divergences.  
We last compare the convexity difference between KL divergence and transport KL divergence.
\subsection{Proofs in section \ref{sec3}}
We present all derivations of transport information Bregman divergences. 
\begin{proof}[Proof of Proposition \ref{prop7}]
Our proof follows from Definition \ref{def4}. Notice that $T(x)=\nabla_x\Phi_p(x)$, $x=\nabla_x\Phi_q(x)$. 

(i) Since $\frac{\delta}{\delta q(x)}\mathcal{V}(q)=V(x)$, then 
\begin{equation*}
\begin{split}
\mathrm{D}_{\rt,\mathcal{V}}(p\|q)=&\int_\Omega V(x)p(x)dx-\int_\Omega V(x)q(x)dx-\int_\Omega (T(x)-x, \nabla_xV(x))q(x)dx\\
=&\int_\Omega \Big(V(T(x))-V(x)-(T(x)-x, \nabla_xV(x))\Big)q(x)dx\\
=&\int_\Omega \mathrm{D}_{V}(T(x)\|x)q(x)dx.
\end{split}
\end{equation*}

(ii) Since $\frac{\delta}{\delta q(x)}\mathcal{W}(q)=\int_\Omega W(x,\tilde x)q(\tilde x)d\tilde x$, then
\begin{equation*}
\begin{split}
\mathrm{D}_{\rt,\mathcal{W}}(p\|q)=&\frac{1}{2}\int_\Omega\int_\Omega W(x,\tilde x)p(x)p(\tilde x)dxd\tilde x-\frac{1}{2}\int_\Omega\int_\Omega W(x,\tilde x)q(x)q(\tilde x)dxd\tilde x\\
&-\int_\Omega\int_\Omega (T(x)-x, \nabla_xW(x,\tilde x))q(x)q(\tilde x)dxd\tilde x\\
=&\frac{1}{2}\int_\Omega\int_\Omega W(T(x),T(\tilde x))q(x)q(\tilde x)dxd\tilde x-\frac{1}{2}\int_\Omega\int_\Omega W(x,\tilde x)q(x)q(\tilde x)dxd\tilde x\\
&-\frac{1}{2}\int_\Omega\int_\Omega (T(x)-T(\tilde x)-(x-\tilde x), \nabla_xW(x,\tilde x))q(x)q(\tilde x)dxd\tilde x\\
=&\frac{1}{2}\int_\Omega\int_\Omega \Big(\tilde W(T(x)-T(\tilde x))-\tilde W(x-\tilde x)\\
&\hspace{1.4cm}-(T(x)-T(\tilde x)-(x-\tilde x), \nabla\tilde W(x-\tilde x))\Big)q(x)q(\tilde x)dxd\tilde x\\
=&\frac{1}{2}\int_\Omega\int_\Omega \mathrm{D}_{\tilde W}(T(x)-T(\tilde x)\|x-\tilde x)q(x)q(\tilde x)dxd\tilde x.
\end{split}
\end{equation*}
In above derivations, the second equality uses the pushforward relation 
\begin{equation*}
\int_\Omega\int_\Omega W(x,\tilde x)p(x)p(\tilde x)dx d\tilde x=\int_\Omega\int_\Omega W(T(x), T(\tilde x))q(x)q(\tilde x)dxd\tilde x,
\end{equation*}
and applies the equality that 
\begin{equation*}
\begin{split}
&\int_\Omega\int_\Omega (T(x)-x, \nabla_xW(x,\tilde x))q(x)q(\tilde x)dxd\tilde x\\
=&\frac{1}{2}\int_\Omega\int_\Omega (T(x)-x, \nabla_xW(x,\tilde x))q(x)q(\tilde x)dxd\tilde x+\frac{1}{2}\int_\Omega\int_\Omega (T(\tilde x)-\tilde x, \nabla W(\tilde x,x))q(x)q(\tilde x)dxd\tilde x\\
=&\frac{1}{2}\int_\Omega\int_\Omega (T(x)-x-(T(\tilde x)-\tilde x), \nabla_xW(x,\tilde x))q(x)q(\tilde x)dxd\tilde x,
\end{split}
\end{equation*}
where $\nabla_xW(x,\tilde x)=-\nabla_{\tilde x}W(x,\tilde x)$.

(iii) In this case, $\frac{\delta}{\delta q(x)}\mathcal{U}(q)=U'(q(x))$. Then 
\begin{equation}\label{u1}
\mathrm{D}_{\rt,\mathcal{U}}(p\|q)=\mathcal{U}(p)-\mathcal{U}(q)-\int_\Omega \big(T(x)-x, \nabla_xU'(q(x))\big)q(x)dx.
\end{equation}
Denote $y=T(x)$, then 
\begin{equation}\label{u2}
\begin{split}
\mathcal{U}(p)=&\int_\Omega U(p(y))dy\\
=&\int_\Omega U(p(T(x))) \mathrm{det}(\nabla_xT(x))dx\\
=&\int_\Omega U(\frac{q(x)}{\mathrm{det}(\nabla_xT(x))})\mathrm{det}(\nabla_xT(x))dx,
\end{split}
\end{equation}
where we use the fact 
\begin{equation*}
p(T(x))\mathrm{det}(\nabla_xT(x))=q(x).
\end{equation*}
In addition, 
\begin{equation}\label{u3}
\begin{split}
&-\int_\Omega (T(x)-x, \nabla_xU'(q(x)))q(x)dx\\
=&\int_\Omega \nabla_x\cdot\big(q(x)(T(x)-x)\big) U'(q(x))dx \\
=&\int_\Omega \Big(\big(\nabla_xq(x), T(x)-x\big)U'(q(x))+\nabla_x\cdot(T(x)-x)U'(q(x))q(x)\Big)dx\\
=&\int_\Omega \Big(\nabla_xU(q(x)), T(x)-x\Big)dx+\int_\Omega \Big(\nabla_x\cdot(T(x)-x)U'(q(x))q(x)\Big)dx\\
=&\int_\Omega \nabla_x\cdot(T(x)-x)\big[U'(q(x))q(x)-U(q(x))\big]dx.
\end{split}
\end{equation}
Substituting \eqref{u2} and \eqref{u3} into \eqref{u1}, and formulating the results in term of a matrix divergence form, we derive the result. 
\end{proof}
\begin{proof}[Proof of Proposition \ref{prop8}]
The proof follows the relation between the joint density and the mapping. (i) For the linear energy, we have 
\begin{equation*}
\begin{split}
\mathrm{D}_{\rt,\mathcal{V}}(p\|q)=&\int_\Omega \mathrm{D}_V(T(x)\|x)q(x)dx\\
=&\int_\Omega \int_\Omega \mathrm{D}_V(y\|x)\pi(x,y)dxdy. 
\end{split}
\end{equation*}
\noindent(ii) For the interaction energy, we have 
\begin{equation*}
\begin{split}
\mathrm{D}_{\rt,\mathcal{W}}(p\|q)=&\frac{1}{2}\int_\Omega\int_\Omega \mathrm{D}_{\tilde W}(T(x)-T(\tilde x)\|x-\tilde x)q(x)q(\tilde x)dxd\tilde x\\
=&\frac{1}{2}\int_{\Omega}\int_{\Omega}\int_{\Omega}\int_{\Omega} \mathrm{D}_{\tilde W}(y-\tilde y\|x-\tilde x)\pi(x,y)\pi(\tilde x,\tilde y)dxdyd\tilde xd\tilde y. 
\end{split}
\end{equation*}
\noindent(iii) For the entropy, we have 
\begin{equation*}
\begin{split}
\mathrm{D}_{\rt,\mathcal{U}}(p\|q)=&\int_\Omega\Big(U(p(x))-U(q(x))+(\nabla_x\cdot T(x)-d)\bar U(q(x))\Big)dx\\
%=&\quad\int_\Omega U(\int_\Omega \pi(x,y)dx)dy-\int_\Omega U(\int_\Omega\pi(x,y)dy)dx\\
%&+\int_\Omega (\nabla_x\cdot \int_\Omega y\pi(x,y)dy-d)\bar U(\int_\Omega \pi(x,z)dz)dx\\
=&\quad\int_\Omega U(\int_\Omega \pi(x,y)dx)dy-\int_\Omega U(\int_\Omega\pi(x,y)dy)dx\\
&+\int_\Omega (\int_\Omega y\cdot \nabla_x\frac{\pi(x,y)}{q(x)}dy-d)\bar U(\int_\Omega \pi(x,z)dz)dx,
\end{split}
\end{equation*}
which finishes the proof. 
\end{proof}
\begin{proof}[Proof of Proposition \ref{prop9}]
The proof applies the pushforward relation:
\begin{equation*}
p(T(x))|\nabla_x T(x)|=q(x). 
\end{equation*}
Notice that the optimal map $T(x)$ in one dimensional spatial domain is monotonically increasing, i.e. $\nabla_x T(x)\geq 0$. Then 
\begin{equation*}
p(T(x))\nabla_xT(x)=q(x),
\end{equation*}
i.e. 
\begin{equation*}
dF_p(T(x))=dF_q(x), 
\end{equation*}
where $F_p$, $F_q$ are cumulative functions of $p$, $q$, respectively.  One can solve the above equation by 
\begin{equation*}
T(x)=F_p^{-1}(F_q(x)).
\end{equation*}
Denote $y=F_q(x)$. Then $x=F_q^{-1}(y)$ and 
\begin{equation*}
q(x)=\frac{dy}{dx}=\frac{1}{\frac{dx}{dy}}=\frac{1}{\nabla_yF_q^{-1}(y)}.
\end{equation*}
We are now ready to derive closed form solutions for \eqref{TB}.

\noindent(i) For \eqref{linear}, we have 
\begin{equation*}
\begin{split}
\mathrm{D}_{\rt,\mathcal{V}}(p\|q)=&\int_\Omega \mathrm{D}_{V}(T(x)\|x)q(x)dx\\
=&\int_\Omega \mathrm{D}_{V}(F^{-1}_p(F_q(x))\|x)dF_q(x)\\
=&\int_0^1 \mathrm{D}_{V}(F^{-1}_p(y)\|F^{-1}_q(y))dy.
\end{split}
\end{equation*}
(ii) For \eqref{interact}, we have 
\begin{equation*}
\begin{split}
\mathrm{D}_{\rt,\mathcal{W}}(p\|q)=&\frac{1}{2}\int_\Omega\int_\Omega \mathrm{D}_{\tilde W}(T(x)-T(\tilde x)\|x-\tilde x)q(x)q(\tilde x)dxd\tilde x\\
=&\frac{1}{2}\int_\Omega\int_\Omega \mathrm{D}_{\tilde W}(F^{-1}_p(F_q(x))-F^{-1}_p(F_q(\tilde x))\|x-\tilde x)dF_q(x)dF_q(\tilde x)\\
=&\frac{1}{2}\int_0^1\int_0^1 \mathrm{D}_{\tilde W}(F^{-1}_p(y)-F^{-1}_p(\tilde y)\|F_q^{-1}(y)-F_q^{-1}(\tilde y))dyd\tilde y.
\end{split}
\end{equation*}
(iii) For \eqref{entropy}, we have
\begin{equation*}
\begin{split}
\mathrm{D}_{\rt,\mathcal{U}}(p\|q)=&\int_\Omega \Big\{U(\frac{q(x)}{\nabla_x T(x)})\nabla_x T(x)-U(q(x))+\nabla_x\cdot(T(x)-x)\big(q(x)U'(q(x))-U(q(x))\big)\Big\}dx\\
=&\int_\Omega \Big\{U(\frac{q(x)}{\nabla_x F_p^{-1}(F_q(x))})\nabla_x F_p^{-1}(F_q(x))-U(q(x))\\
&\quad +(\nabla_xF_p^{-1}(F_q(x))-1)\big(q(x)U'(q(x))-U(q(x))\big)\Big\}dx\\
=&\int_0^1 \Big\{U(\frac{1}{\nabla_yF_q^{-1}(y)\nabla_x F_p^{-1}(y)})\nabla_x F_p^{-1}(y)-U(\frac{1}{\nabla_yF_q^{-1}(y)})\\
&\quad +(\nabla_xF_p^{-1}(y)-1)\big(\frac{1}{\nabla_yF_q^{-1}(y)}U'(\frac{1}{\nabla_yF_q^{-1}(y)})-U(\frac{1}{\nabla_yF_q^{-1}(y)})\big)\Big\}{\nabla_yF_q^{-1}(y)}dy\\
=&\int_0^1 \Big\{U(\frac{1}{\nabla_yF_p^{-1}(y)})\nabla_y F_p^{-1}(y)-U(\frac{1}{\nabla_yF_q^{-1}(y)})\nabla_yF_q^{-1}(y)\\
&\quad +(\nabla_yF_p^{-1}(y)-\nabla_yF_q^{-1}(y))\big(\frac{1}{\nabla_yF_q^{-1}(y)}U'(\frac{1}{\nabla_yF_q^{-1}(y)})-U(\frac{1}{\nabla_yF_q^{-1}(y)})\big)\Big\}dy\\
=&\int_0^1 \Big\{ \tilde U(\nabla_y F_p^{-1}(y))-\tilde U(\nabla_yF_q^{-1}(y)) -\big(\nabla_yF_p^{-1}(y)-\nabla_yF_q^{-1}(y)\big)\tilde U'(\nabla_y F_q^{-1}(y))\Big\}dy\\
=&\int_0^1 \mathrm{D}_{\tilde U}(\nabla_yF_p^{-1}(y)\|\nabla_yF_q^{-1}(y))dy,
\end{split}
\end{equation*}
where the fourth equality applies the chain rule that 
\begin{equation*}
y=F_q(x),\quad \nabla_xF_p^{-1}(y)=\nabla_yF_p^{-1}(y)\cdot\frac{dy}{dx}=\frac{\nabla_yF_p^{-1}(y)}{\nabla_yF_q^{-1}(y)},
\end{equation*}
and the last equality applies the fact that
\begin{equation*}
\tilde U(z)=zU(\frac{1}{z}),\quad \tilde U'(z)=U(\frac{1}{z})-\frac{1}{z}U'(\frac{1}{z}).
\end{equation*}
This finishes the proof.
\end{proof}
\subsection{Closed form formulas}
In this subsection, we summarize the derived transport divergence functions as follows. Given $p$, $q\in\mathcal{P}(\Omega)$, denote the convex function $\Phi_p\colon \Omega\rightarrow\mathbb{R}$, such that ${(\nabla_x\Phi_p)}_\#q=p$, i.e. 
\begin{equation*}
p(\nabla_x\Phi_p(x))\mathrm{det}(\nabla_x^2\Phi_p(x))=q(x).
\end{equation*}
Define the transport Bregman divergence of functional $\mathcal{F}(p)$ by
\begin{equation*}
\mathrm{D}_{\rt,\mathcal{F}}(p\|q)=\mathcal{F}(p)-\mathcal{F}(q)-\int_\Omega \Big(\nabla_x\frac{\delta}{\delta q(x)}\mathcal{F}(q), \nabla_x\Phi_p(x)-x\Big)q(x)dx.
\end{equation*}
Several examples of transport Bregman divergences are given below. 
\begin{itemize}
\item[(i)] Linear energy: If $\mathcal{V}(p)=\int_\Omega V(x)p(x)dx$, then 
\begin{equation*}
\begin{split}
\mathrm{D}_{\rt,\mathcal{V}}(p\|q)=&\int_\Omega \mathrm{D}_{V}(\nabla_x\Phi_p(x)\|x)q(x)dx,
\end{split}
\end{equation*}
where $\mathrm{D}_{V}$ is a Euclidean Bregman divergence of $V$ defined by
\begin{equation*}
\mathrm{D}_V(z_1\|z_2)=V(z_1)-V(z_2)-\nabla V(z_2)\cdot (z_1-z_2),\quad\textrm{for any $z_1,z_2\in\Omega$}.
\end{equation*}
\item[(ii)] Interaction energy: If $\mathcal{W}(p)=\frac{1}{2}\int_\Omega\int_\Omega \tilde W(x-\tilde x)p(x) p(\tilde x)dxd\tilde x$, then
\begin{equation*}
\begin{split}
\mathrm{D}_{\rt,\mathcal{W}}(p\|q)=&\frac{1}{2}\int_\Omega\int_\Omega \mathrm{D}_{\tilde W}\big(\nabla_x\Phi_p(x)-\nabla_{\tilde x}\Phi_p(\tilde x)\|x-\tilde x\big)q(x)q(\tilde x)dxd\tilde x,
\end{split}
\end{equation*}
where $\mathrm{D}_{\tilde W}$ is a Euclidean Bregman divergence of $\tilde W$ defined by
\begin{equation*}
\mathrm{D}_{\tilde W}(z_1\|z_2)=\tilde W(z_1)-\tilde W(z_2)-\nabla \tilde W(z_2)\cdot(z_1-z_2),\quad\textrm{for any $z_1,z_2\in\Omega$}.
\end{equation*}
\item[(iii)] Negative entropy: If $\mathcal{U}(p)=\int_\Omega U(p(x))dx$, then  
\begin{equation*}
\begin{split}
\mathrm{D}_{\rt,\mathcal{U}}(p\|q)
=\int_\Omega &\Big\{U(\frac{q(x)}{\mathrm{det}(\nabla^2_x \Phi_p(x))})\mathrm{det}(\nabla^2_x \Phi_p(x))-U(q(x))\\
&+(\Delta_x\Phi_p(x)-d)\Big(q(x)U'(q(x))-U(q(x))\Big)\Big\}dx.
\end{split}
\end{equation*}
\item[(iv)] Transport KL divergence: If $\mathcal{U}(p)=\int_\Omega p(x)\log p(x)dx$, then
\begin{equation*}
\mathrm{D}_{\mathrm{TKL}}(p\|q)=\int_\Omega \Big(\Delta_x\Phi_p(x)-\log\mathrm{det}(\nabla^2_x\Phi_p(x))-d\Big)q(x)dx.
\end{equation*}
\end{itemize}
Given one dimensional sample space, several closed formulas of transport divergences are given below.
\begin{itemize}
\item[(i)] Linear energy:
\begin{equation*}
\begin{split}
\mathrm{D}_{\rt,\mathcal{V}}(p\|q)=\int_0^1 \mathrm{D}_{V}(F^{-1}_p(x)\|F^{-1}_q(x))dx.
\end{split}
\end{equation*}
\item[(ii)] Interaction energy:
\begin{equation*}
\begin{split}
\mathrm{D}_{\rt,\mathcal{W}}(p\|q)=&\frac{1}{2}\int_0^1\int_0^1 \mathrm{D}_{\tilde W}(F^{-1}_p(x)-F^{-1}_p(\tilde x)\|F^{-1}_q(x)-F^{-1}_q(\tilde x))dxd\tilde x.
\end{split}
\end{equation*}
\item[(iii)] Negative entropy:
\begin{equation*}
\begin{split}
\mathrm{D}_{\rt,\mathcal{U}}(p\|q)=\int_0^1 \mathrm{D}_{\tilde U}(\nabla_xF_p^{-1}(x)\|\nabla_xF_q^{-1}(x))dx,
\end{split}
\end{equation*}
where 
\begin{equation*}
\tilde U(z)=zU(\frac{1}{z}),
\end{equation*}
and $\mathrm{D}_{\tilde U}$ is a Euclidean Bregman divergence of $\tilde U$ defined by 
\begin{equation*}
\mathrm{D}_{\tilde U}(z_1\|z_2)=\tilde U(z_1)-\tilde U(z_2)-\nabla_z\tilde U(z_2)\cdot (z_1-z_2).
\end{equation*}
If $U(p)=\frac{p^2}{2}$, then 
\begin{equation*}
\mathrm{D}_{\rt,\mathcal{U}}(p\|q)=\frac{1}{2}\int_0^1 (\frac{1}{\nabla_xF_p^{-1}(x)}-\frac{1}{\nabla_xF_q^{-1}(x)})^2\cdot \nabla_xF^{-1}_p(x)dx.
\end{equation*}
\item[(iv)] Transport KL divergence:  
\begin{equation*}
\mathrm{D}_{\rtKL}(p\|q)=\int_0^1\Big(\frac{\nabla_xF_p^{-1}(x)}{\nabla_xF_q^{-1}(x)}-\log \frac{\nabla_xF_p^{-1}(x)}{\nabla_xF_q^{-1}(x)}-1\Big)dx.
\end{equation*}
\item[(v)] Transport Jenson--Shannon divergence: 
\begin{equation*}
\mathrm{D}_{\rtJS}(p\|q)=-\frac{1}{2}\int_{0}^1\log\frac{\nabla_xF_p^{-1}(x)\cdot \nabla_xF_q^{-1}(x)}{\frac{1}{4}(\nabla_xF_p^{-1}(x)+\nabla_xF_q^{-1}(x))^2} dx.
\end{equation*}
\end{itemize}
Given Gaussian distributions, several analytical formulas for transport information Bregman divergences are provided. Denote
\begin{equation*}
p_X(x)=\frac{1}{\sqrt{(2\pi)^d\mathrm{det}(\Sigma_X)}}e^{-\frac{1}{2}x^{\ts}\Sigma_X^{-1}x},\quad p_Y(x)=\frac{1}{\sqrt{(2\pi)^d\mathrm{det}(\Sigma_Y)}}e^{-\frac{1}{2}x^{\ts}\Sigma_Y^{-1}x}.
\end{equation*}
\begin{itemize}
\item[(i)] Transport KL divergence:
\begin{equation*}
\begin{split}
\mathrm{D}_{\mathrm{TKL}}(p_X\|p_Y)=\frac{1}{2}\log\frac{\mathrm{det}(\Sigma_Y)}{\mathrm{det}(\Sigma_X)}+\mathrm{tr}\Big(\Sigma_X^{\frac{1}{2}}\Big(\Sigma_X^{\frac{1}{2}}\Sigma_Y\Sigma_X^{\frac{1}{2}}\Big)^{-\frac{1}{2}}\Sigma_X^{\frac{1}{2}}\Big)-d.
\end{split}
\end{equation*}
If $\Sigma_X$, $\Sigma_Y$ commute, i.e. $\Sigma_X\Sigma_Y=\Sigma_Y\Sigma_X$, then 
\begin{equation*}
\mathrm{D}_{\mathrm{TKL}}(p_X\|p_Y)=\frac{1}{2}\log\frac{\mathrm{det}(\Sigma_Y)}{\mathrm{det}(\Sigma_X)}+\mathrm{tr}\Big(\Sigma_X^{\frac{1}{2}}\Sigma_Y^{-\frac{1}{2}}\Big)-d.
\end{equation*}
\item[(ii)] Transport Jenson-Shannon divergence: 
\begin{equation*}
\begin{split}
\mathrm{D}_{\rtJS}(p_X\|p_Y)=&-\frac{1}{4}\log\frac{\mathrm{det}(\Sigma_X)\mathrm{det}(\Sigma_Y)}{\mathrm{det}(\Sigma_Z)^2}\\
&+\frac{1}{2}\mathrm{tr}\Big(\Sigma_X^{\frac{1}{2}}\Big(\Sigma_X^{\frac{1}{2}}\Sigma_Z\Sigma_X^{\frac{1}{2}}\Big)^{-\frac{1}{2}}\Sigma_X^{\frac{1}{2}}+ \Sigma_Y^{\frac{1}{2}}\Big(\Sigma_Y^{\frac{1}{2}}\Sigma_Z\Sigma_Y^{\frac{1}{2}}\Big)^{-\frac{1}{2}}\Sigma_Y^{\frac{1}{2}} \Big)-d,
\end{split}
\end{equation*}
where 
\begin{equation*}
\Sigma_Z=\frac{1}{4}(\mathbb{I}+\mathbb{T})\Sigma_Y(\mathbb{I}+\mathbb{T}), \quad\textrm{and}\quad \mathbb{T}=\Sigma_X^{\frac{1}{2}}\Big(\Sigma_X^{\frac{1}{2}}\Sigma_Y\Sigma_X^{\frac{1}{2}}\Big)^{-\frac{1}{2}}\Sigma_X^{\frac{1}{2}}.
\end{equation*}
If $\Sigma_X$, $\Sigma_Y$ commute, then
\begin{equation*}
\mathrm{D}_{\rtJS}(p_X\|p_Y)=-\frac{1}{2}\log\frac{\mathrm{det}(\Sigma_X^\frac{1}{2})\mathrm{det}(\Sigma_Y^{\frac{1}{2}})}{\mathrm{det}\Big(\frac{1}{4}(\Sigma^{\frac{1}{2}}_X+\Sigma^{\frac{1}{2}}_Y)^2\Big)}.
\end{equation*}
\end{itemize}
\subsection{Comparisons between KL divergence and transport KL divergence}\label{sec52}
In this subsection, we compare KL divergence with transport KL divergence. This is an example to compare Bergman divergences in $L^2$ space and $L^2$--Wasserstein space. To do so, we first formulate the classical KL divergence in optimal transport coordinates. 
\begin{proposition}[KL divergence in transport coordinates \cite{AGS}]
KL divergence has the following formulations:
\begin{equation*}
\begin{split}
\mathrm{D}_{\mathrm{KL}}(p\|q)=\int_\Omega q(x)\log q(x)dx-\int_\Omega \log\mathrm{det}(\nabla_x^2\Phi_p(x))q(x)dx-\int_\Omega \log q(\nabla_x\Phi_p(x))q(x)dx,
\end{split}
\end{equation*}
where $\nabla_x\Phi_p\#q=p$. 
\end{proposition}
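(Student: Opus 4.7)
The plan is to start from the usual definition
\[
\mathrm{D}_{\mathrm{KL}}(p\|q)=\int_\Omega p(y)\log p(y)\,dy-\int_\Omega p(y)\log q(y)\,dy
\]
and rewrite both integrals in transport coordinates by changing variables along the optimal Brenier map $y=\nabla_x\Phi_p(x)$. The only nontrivial input is the Monge--Ampère (pushforward) identity
\[
p(\nabla_x\Phi_p(x))\,\mathrm{det}(\nabla_x^2\Phi_p(x))=q(x),
\]
which is exactly equation \eqref{MA} already used earlier in the paper.

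First I would handle the cross-entropy term. Since $(\nabla_x\Phi_p)_\#q=p$, applied to the test function $\log q$ this pushforward relation immediately yields
\[
\int_\Omega p(y)\log q(y)\,dy=\int_\Omega q(x)\log q(\nabla_x\Phi_p(x))\,dx,
\]
producing the third term on the right-hand side.

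Next I would treat the entropy term. The same change of variables gives
\[
\int_\Omega p(y)\log p(y)\,dy=\int_\Omega q(x)\log p(\nabla_x\Phi_p(x))\,dx.
\]
Taking logarithms in the Monge--Ampère identity, $\log p(\nabla_x\Phi_p(x))=\log q(x)-\log\mathrm{det}(\nabla_x^2\Phi_p(x))$, and substituting, I obtain
\[
\int_\Omega p(y)\log p(y)\,dy=\int_\Omega q(x)\log q(x)\,dx-\int_\Omega q(x)\log\mathrm{det}(\nabla_x^2\Phi_p(x))\,dx.
\]
Subtracting the cross-entropy term identified above produces the claimed formula.

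There is no substantive obstacle: once the Monge--Ampère equation is in hand, the derivation is a direct change of variables. The only care needed is technical, namely that $\Phi_p$ is smooth and strictly convex and $p,q$ are strictly positive so that $\log p$, $\log q$ and $\log\mathrm{det}(\nabla_x^2\Phi_p)$ are well defined; this is guaranteed by the smooth density class $\mathcal{P}(\Omega)$ introduced at the start of Section \ref{sec3} together with the transport coordinates in \eqref{nc}.
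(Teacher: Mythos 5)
Your proof is correct and follows essentially the same route as the paper: both rest on the change of variables $y=\nabla_x\Phi_p(x)$ together with the Monge--Amp\`ere identity $p(\nabla_x\Phi_p(x))\,\mathrm{det}(\nabla_x^2\Phi_p(x))=q(x)$, the only difference being that you split the KL divergence into entropy and cross-entropy terms before transforming, while the paper transforms the single integral $\int_\Omega p(y)\log\frac{p(y)}{q(y)}dy$ and then expands the logarithm. This is a cosmetic reordering, not a substantive difference.
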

\begin{proof}
Denote $y=\nabla_x\Phi_p(x)$. From the Monge-Amper{\'e} equation, we have
\begin{equation*}
\begin{split}
\mathrm{D}_{\mathrm{KL}}(p\|q)=&\int_\Omega p(y)\log\frac{p(y)}{q(y)}dy\\
=&\int_\Omega p(\nabla_x\Phi_p(x))\log\frac{p(\nabla_x\Phi_p(x))}{q(\nabla_x\Phi_p(x))}\mathrm{det}(\nabla^2_x\Phi_p(x))dx\\
=&\int_\Omega q(x)\log\frac{q(x)}{\mathrm{det}(\nabla_x^2\Phi_p(x))q(\nabla_x\Phi_p(x))}dx\\
=&\int_\Omega q(x)\log q(x)dx-\int_\Omega q(x)\log\mathrm{det}(\nabla_x^2\Phi_p(x))dx-\int_\Omega q(x)\log q(\nabla_x\Phi_p(x))dx.
\end{split}
\end{equation*}
\end{proof}
We notice that the negative Boltzmann entropy term is convex w.r.t both $p$ and $\nabla_x\Phi_p(x)$, i.e. 
\begin{equation*}
\begin{split}
\mathcal{H}(p)=&-\int_\Omega p(x)\log p(x)dx=\int_\Omega \log\mathrm{det}(\nabla_x^2\Phi_p(x))q(x)dx.
\end{split}
\end{equation*}
This is true because $t\log t$ is convex w.r.t. $t\in\mathbb{R}_+$, and $-\log\mathrm{det}(A)$ is convex w.r.t matrix $A\in\mathbb{R}^{d\times d}$. Hence we know that the major difference between KL divergence and transport KL divergence is the corresponding cross entropy term. Here the cross entropy in $L^2$ space leads to 
\begin{equation*}
\begin{split}
\mathcal{H}_q(p)=&-\int_\Omega \log q(x) p(x)dx=-\int_\Omega \log q(\nabla_x\Phi_p(x))q(x)dx.
\end{split}
\end{equation*}
While the transport cross entropy in $L^2$--Wasserstein space forms 
\begin{equation*}
\mathcal{H}_{\mathrm{T},q}(p)=\int_\Omega \Delta_x\Phi_p(x)q(x)dx-C, 
\end{equation*}
where $C=d+\int_\Omega q(x)\log q(x)dx$. To summarize, KL divergence $\mathrm{D}_{\mathrm{KL}}(p\|q)=-\mathcal{H}(p)+\mathcal{H}_q(p)$
is convex w.r.t. mapping function $\nabla_x\Phi_p(x)$ if $-\log q(x)$ is convex in $\Omega$. While transport KL divergence 
$\mathrm{D}_{\rtKL}(p\|q)=-\mathcal{H}(p)+\mathcal{H}_{\mathrm{T}, q}(p)$ is always convex w.r.t. mapping function $\nabla_x\Phi_p(x)$. 
\end{document}